\renewcommand{\ALG@name}{Algorithm}
\newtheorem{theorem}{Theorem}[section]
\newtheorem{lemma}[theorem]{Lemma}
\newtheorem{corollary}[theorem]{Corollary}
\newtheorem{definition}[theorem]{Definition}
\newtheorem{remark}[theorem]{Remark}
\crefname{question}{Question}{Questions}
\newtheorem{notation}[theorem]{Notation}
\newcommand{\braket}[2]{\left< #1 \vphantom{#2} \middle| #2 \vphantom{#1} \right>} % for Dirac brackets
\DeclarePairedDelimiter\abs{\lvert}{\rvert}
\DeclarePairedDelimiter\ket{\lvert}{\rangle}
\DeclarePairedDelimiter\bra{\langle}{\rvert}
\newcommand{\E}{\mathop{\bf E\/}}
\newcommand{\tr} {\operatorname{tr}}
\newcommand{\Co}{\mathbb C}
\newcommand{\ketbra}[2]{\ensuremath{\ket{#1}\!\bra{#2}}}
\renewcommand{\braket}[2]{\ensuremath{\langle {#1} \vert {#2} \rangle}}
\newcommand{\kett}[1]{|#1\rangle\!\rangle}
\newcommand{\bbra}[1]{\langle\!\langle#1|}
\newcommand{\kettbbra}[2]{\ensuremath{\kett{#1}\!\bbra{#2}}}
\newcommand{\bbrakett}[2]{\ensuremath{\langle\!\langle{#1}\vert{#2}\rangle\!\rangle}}
\DeclarePairedDelimiter\parens{\lparen}{\rparen}
\DeclarePairedDelimiter\norm{\lVert}{\rVert}
\DeclarePairedDelimiter\braces{\lbrace}{\rbrace}
\DeclarePairedDelimiter\bracks{\lbrack}{\rbrack}
\newcommand{\calA}{\mathcal{A}}
\newcommand{\calE}{\mathcal{E}}
\newcommand{\calF}{\mathcal{F}}
\newcommand{\calV}{\mathcal{V}}
\newcommand{\qchannel}{\textbf{\textup{QChan}}}
\newcommand{\isochannel}{\textbf{\textup{ISO}}}
\newcommand{\dilation}{\textbf{\textup{Dilation}}}
\newcommand{\contract}{\textbf{\textup{Contract}}}
\begin{document}

\title{Quantum channel tomography and estimation by local test}
\author{Kean Chen \thanks{University of Pennsylvania, Philadelphia, USA. Email: \texttt{keanchen.gan@gmail.com}}\and
Nengkun Yu \thanks{Stony Brook University, NY, USA. Email: \texttt{nengkunyu@gmail.com}}\and
Zhicheng Zhang \thanks{University of Technology Sydney, Sydney, Australia. Email: \texttt{iszczhang@gmail.com}}
}
\date{}

\maketitle

\begin{abstract}
We study the estimation of an unknown quantum channel $\mathcal{E}$ with input dimension $d_1$, output dimension $d_2$ and Kraus rank at most $r$. 
We establish a connection between the query complexities in two models: (i) access to $\mathcal{E}$, and (ii) access to a random dilation of $\mathcal{E}$.
Specifically, we show that for parallel (possibly coherent) testers, access to dilations does not help.
This is proved by constructing a local tester that uses $n$ queries to $\mathcal{E}$ yet faithfully simulates the tester with $n$ queries to a random dilation.
As application, we show that:
\begin{itemize}
\item $O(rd_1d_2/\varepsilon^2)$ queries to $\mathcal{E}$ suffice for channel tomography to within diamond norm error $\varepsilon$.
\end{itemize}
Moreover, when $rd_2=d_1$, we show that the Heisenberg scaling $O(1/\varepsilon)$ can be achieved, even if $\mathcal{E}$ is not a unitary channel:
\begin{itemize}
\item 
$O(\min\{d_1^{2.5}/\varepsilon,d_1^2/\varepsilon^2\})$ queries to $\mathcal{E}$ suffice for channel tomography to within diamond norm error $\varepsilon$, and $O(d_1^2/\varepsilon)$ queries suffice for the case of Choi state trace norm error $\varepsilon$. 
\item $O(\min\{d_1^{1.5}/\varepsilon,d_1/\varepsilon^2\})$ queries to $\mathcal{E}$ suffice for tomography of the mixed state $\mathcal{E}(\ketbra{0}{0})$ to within trace norm error $\varepsilon$. 
\end{itemize}
\end{abstract}

\section{Introduction}
Characterizing quantum dynamics is fundamental to quantum computing and quantum information science, playing a central role in the modeling, control, and verification of quantum systems. 
%This naturally leads to the following question: if a quantum physical process is treated as a black box, how can we determine what it does?
An important question is how to estimate a quantum physical process when it is given as a black box.
\textit{Quantum channel tomography} (also known as \textit{quantum process tomography}) refers to this task: one probes the unknown quantum process, which is mathematically modeled as a quantum channel $\mathcal{E}$, and aims to reconstruct $\mathcal{E}$ from the experimental data.

A notable special case arises when the unknown channel has input dimension $1$, in which case the problem reduces to \textit{quantum state tomography}.
For pure states, the optimal theory has been well understood since the early days of quantum information science~\cite{hayashi1998asymptotic,bruss1999optimal,keyl1999optimal}.
The mixed-state case was settled decades later~\cite{Haah_2017,10.1145/2897518.2897544}, 
using techniques quite different from those developed for pure states (see also \cite{o2017efficient,GKKT20,yuen2023improved,scharnhorst2025optimal,pelecanos2025debiased} for subsequent advances). 
Surprisingly, Pelecanos, Spilecki, Tang, and Wright~\cite{pelecanos2025mixedstatetomographyreduces} recently showed that mixed-state tomography can be reduced to pure-state tomography while achieving the optimal performance. 
This directly inspires us to study efficient methods for quantum channel tomography by leveraging results from isometry channel tomography.

Tomography of a general quantum channel, however, is substantially more challenging, and has been studied for nearly three decades~\cite{chuang1997prescription,poyatos1997complete,leung2000towards,d2001quantum,mohseni2008quantum,kliesch2019guaranteed,bouchard2019quantum,surawy2022projected,Oufkir_2023,oufkir2023adaptivity,huang2023learning,pmlr-v195-fawzi23a,caro2024learning,rosenthal2024quantum,zhao2024learning,zambrano2025fast,yoshida2025quantum}. This added difficulty stems from both the potential power of sequential and adaptive strategies and the complexity of the metrics used to quantify distances between quantum channels.
More specifically, one may prepare arbitrary input states, including states produced by applying the unknown channel in earlier rounds of an experiment. Moreover, standard notions of distance between channels, most notably the diamond norm~\cite{AKN98,watrous2018theory}, are defined via a maximization over all possible input states (including those entangled with an ancilla), which makes both analysis and estimation considerably more demanding.

Haah, Kothari, O'Donnell, and Tang~\cite{haah2023query} resolved the problem for unitary channels by establishing a query complexity of $\Theta(d^2/\varepsilon)$, where $\varepsilon$ denotes the target accuracy in the diamond norm. For quantum channels with input dimension $d_1$ and output dimension $d_2$,
Oufkir addressed the diamond norm tomography problem in the setting of non-adaptive incoherent measurements, showing that the query complexity is $\widetilde{\Theta}(d_1^3 d_2^3/\varepsilon^2)$~\cite{Oufkir_2023,oufkir2023adaptivity}, where the upper bound is by generalizing the process tomography algorithm of~\cite{surawy2022projected}. 
For general measurement schemes, Rosenthal, Aaronson, Subramanian, Datta, and Gur~\cite{rosenthal2024quantum} proved a lower bound of $\Omega(d_1^2 d_2^2/\log(d_1d_2))$ for full Kraus-rank channel tomography; Yoshida, Miyazaki, and Murao~\cite{yoshida2025quantum} proved a lower bound of $\Omega((d_2-d_1)d_1 /(\varepsilon^{2}\log 1/\varepsilon))$\footnote{Since we consider the tomography with success probability at least $2/3$, the lower bound in \cite{yoshida2025quantum} applies to our setting if the success probability is amplified to $1-O(\varepsilon^2)$, which incurs an additional logarithmic factor on $\varepsilon$.} for isometry channel tomography. Notably, these lower bounds hold even if $\varepsilon$ is an average-case distance error.

\subsection{Our results}
In this paper, we study general estimation tasks of quantum channels and establish a connection between the query complexities in different access models.
Our main result is as follows.
\begin{theorem}[Dilations do not help for parallel testers, \cref{coro-12122150} restated]
\label{thm:main-theorem}
If there exists a parallel (possibly coherent) tester that solves a channel estimation task using $n$ queries to an arbitrary dilation of an unknown quantum channel $\mathcal{E}$, then there exists a parallel tester that solves this task using $n$ queries to $\mathcal{E}$ itself.
\end{theorem}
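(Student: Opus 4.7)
My plan is to reduce the parallel dilation tester to a parallel $\mathcal{E}$-only tester via a twirling-plus-Schur--Weyl argument. First I parameterize the freedom in the dilation: fix a minimal Stinespring dilation $V_0\colon \mathcal{H}_1 \to \mathcal{H}_2 \otimes \mathcal{H}_E$ of $\mathcal{E}$, so that any other dilation of the same environment dimension has the form $V = (I_{\mathcal{H}_2} \otimes W)V_0$ for some unitary $W$ on $\mathcal{H}_E$. Since the given tester solves the task for \emph{every} choice of $V$, it still solves the task (in expectation) when $W$ is sampled from the Haar measure. I will work with this Haar-averaged behavior rather than with one specific $V$.

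Write the given tester as preparing $|\psi\rangle$ on $A \otimes \mathcal{H}_1^{\otimes n}$, applying $V^{\otimes n}$, and measuring with a POVM $\{M_x\}$ on $A \otimes \mathcal{H}_2^{\otimes n} \otimes \mathcal{H}_E^{\otimes n}$. Its Haar-averaged outcome probability is $\bar{p}(x) = \operatorname{tr}(\bar{M}_x \rho_0)$, with $\rho_0 = V_0^{\otimes n}|\psi\rangle\langle\psi|V_0^{\dagger\otimes n}$ and $\bar{M}_x = \mathbb{E}_W (I\otimes W^{\dagger\otimes n}) M_x (I \otimes W^{\otimes n})$. Because $\bar{M}_x$ commutes with $I \otimes W^{\otimes n}$ for every unitary $W$, Schur--Weyl duality on the environment factors decomposes it as $\bar{M}_x = \sum_{\pi \in S_n} T_\pi^{(x)} \otimes P(\pi)$, where $P(\pi)$ is the permutation representation on $\mathcal{H}_E^{\otimes n}$ and $T_\pi^{(x)}$ acts on $A \otimes \mathcal{H}_2^{\otimes n}$. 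Hence $\bar{p}(x) = \sum_\pi \operatorname{tr}[T_\pi^{(x)}\,\sigma_\pi]$ where $\sigma_\pi := \operatorname{tr}_{\mathcal{H}_E^{\otimes n}}[(I \otimes P(\pi))\rho_0]$, which is a quantity depending only on $\mathcal{E}$ and $|\psi\rangle$ (not on the choice of $V$).

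The heart of the proof is then to exhibit a single coherent parallel $\mathcal{E}$-only tester whose outcome distribution equals $\bar{p}(x)$. For $\pi = e$, $\sigma_e = \mathcal{E}^{\otimes n}(|\psi\rangle\langle\psi|)$ is directly the output of $n$ parallel queries to $\mathcal{E}$. For general $\pi$, a Kraus-level calculation shows $\sigma_\pi = \sum_{\vec k}(K_{k_1} \otimes \cdots \otimes K_{k_n})|\psi\rangle\langle\psi|(K_{k_{\pi(1)}}^\dagger \otimes \cdots \otimes K_{k_{\pi(n)}}^\dagger)$, i.e., a sum of Kraus products with the bra-side indices permuted by $\pi$ according to its cycle structure. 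The plan for the $\mathcal{E}$-only tester is to introduce $n$ ancillary registers of dimension $r$ playing the role of a ``virtual environment,'' prepare them in an entangled state that encodes the Schur--Weyl structure, and use identities relating permutations on the environment to conjugate permutations on the input/output registers of the channel (e.g.\ $S^{\mathrm{out}}_{ij} \mathcal{E}^{\otimes n}(\rho) S^{\mathrm{out}}_{ij} = \mathcal{E}^{\otimes n}(S^{\mathrm{in}}_{ij} \rho\, S^{\mathrm{in}}_{ij})$) to transfer the permutation twist off the environment. A single POVM on the joint output of $\mathcal{E}^{\otimes n}$ and the virtual environment then simultaneously evaluates all $T_\pi^{(x)} \otimes P(\pi)$ components.

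The main obstacle I expect is this last step. The individual $\sigma_\pi$ are asymmetric in their Kraus indices and are not produced by $\mathcal{E}^{\otimes n}$ on any product input, so the virtual environment must be entangled with the channel input in a carefully chosen way; moreover, the scheme must handle all $\pi \in S_n$ within a \emph{single} parallel-query run, not by running $|S_n|$ separate experiments. The natural construction maximally entangles each input register $\mathcal{H}_1$ with its virtual-environment ancilla, so that the Choi-state structure of $\mathcal{E}^{\otimes n}$ places the $\pi$-twisted quantities $\sigma_\pi$ on the virtual-environment side automatically. Verifying that the resulting composite measurement faithfully reproduces $\bar{p}(x)$ for every $x$---and checking that the local tester remains within the parallel (non-adaptive) query model---is the bulk of the technical work.
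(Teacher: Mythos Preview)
Your opening moves---Haar-averaging over the environment unitary and invoking Schur--Weyl duality on $\mathcal{H}_E^{\otimes n}$---match the paper exactly, and you correctly flag that the substance of the argument is the construction of the $\mathcal{E}$-only tester. That construction, however, is where your proposal has a genuine gap. Expanding $\bar{M}_x$ in the permutation basis destroys positivity term by term (the $T_\pi^{(x)}$ need not even be Hermitian, and the $\sigma_\pi$ are not states), so there is no direct way to assemble them into a POVM. The identity $S^{\mathrm{out}}_{ij}\mathcal{E}^{\otimes n}(\rho)S^{\mathrm{out}}_{ij}=\mathcal{E}^{\otimes n}(S^{\mathrm{in}}_{ij}\rho S^{\mathrm{in}}_{ij})$ you quote only permutes the identical channel \emph{factors}, not the Kraus indices within a factor, so it does not touch the $\pi$-twist in $\sigma_\pi$. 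And your virtual-environment idea has a dimension mismatch: maximally entangling $\mathcal{H}_1$ with an ancilla and applying $\mathcal{E}$ yields the Choi state on $\mathcal{H}_2\otimes\mathcal{H}_1$, not on $\mathcal{H}_2\otimes\mathcal{H}_E$, and nothing in the sketch explains how a single parallel run exposes all the $\sigma_\pi$ at once while keeping the measurement operators positive.

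The paper closes this gap by working in the Choi/link-product tester formalism and using the irrep form of Schur--Weyl rather than the permutation basis. The key structural input is that the Choi vector $\kett{V_0}^{\otimes n}$, viewed bipartitely across $(\mathcal{H}_1\otimes\mathcal{H}_2)^{\otimes n}$ versus $\mathcal{H}_E^{\otimes n}$, decomposes as $\bigoplus_\lambda \kett{I_{\mathcal{P}_\lambda}}\otimes\ket{V_\lambda}$ with $\kett{I_{\mathcal{P}_\lambda}}$ the maximally entangled state on $\mathcal{P}_{\mathrm{AB},\lambda}\otimes\mathcal{P}_{\mathrm{anc},\lambda}$. One then defines the local tester $\widetilde{T}_i$ on $(\mathcal{H}_1\otimes\mathcal{H}_2)^{\otimes n}$ by contracting the twirled $\overline{T}_i$ against $\kett{I_{\mathcal{P}_\lambda}}$ and tracing out the $\mathcal{Q}^r_\lambda$ factor; positivity of each $\widetilde{T}_i$ is then manifest, and a short calculation gives $\sum_i\widetilde{T}_i\sqsubseteq\rho'_{\mathrm{A}}\otimes I_{\mathrm{B}}$ for the symmetrized input state $\rho'_{\mathrm{A}}=\frac{1}{n!}\sum_\pi \texttt{p}_{\mathrm{A}}(\pi)\rho_{\mathrm{A}}\texttt{p}_{\mathrm{A}}(\pi)^\dagger$, so a single slack operator $\widetilde{T}_\bot$ completes a valid parallel tester. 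This irrep-level construction is precisely the missing idea in your plan.
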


\Cref{thm:main-theorem} provides a clean and systematic approach for designing new quantum algorithms for quantum channel estimation. In general, access to a Stinespring dilation of a quantum channel $\calE$ appears more powerful than access to $\mathcal{E}$ itself.  
However, \Cref{thm:main-theorem} shows that when we restrict to algorithms that make queries in parallel, these two access models are equally powerful in terms of query complexity.
Consequently, one can first design an algorithm assuming queries to the dilation and then translate it into an algorithm that queries the original channel using \cref{thm:main-theorem}. This is often simpler, since the dilation is an isometry and therefore shares many useful properties with unitary operators.

\Cref{thm:main-theorem} can be viewed as an extension of the previous result by \cite{chen2024local}, which studies the power of local test for quantum states and shows that access to purifications does not help for mixed-state testing.
Related results trace back to~\cite[Theorem 35]{soleimanifar2022testing}, and were recently strengthened in an algorithmic sense by \cite{tang2025conjugate}, which explicitly constructs an algorithm for generating random purifications of a mixed state (see also \cite{girardi2025random}). This is further leveraged in~\cite{pelecanos2025mixedstatetomographyreduces} for optimal mixed-state tomography. 
Intuitively, local test and random purification can be viewed as dual concepts in the Heisenberg and Schr{\"o}dinger pictures, respectively.
Finally, we note that \cref{thm:main-theorem} partially answers a conjecture from~\cite{tang2025conjugate} asserting that access to channel dilations does not help.

Now, we introduce the applications of \cref{thm:main-theorem}. Suppose $\mathcal{E}$ is an unknown quantum channel that has input dimension $d_1$, output dimension $d_2$ and Kraus rank at most $r$.

\begin{corollary}[Channel tomography in diamond norm, \Cref{coro-12120134} restated]\label{coro-12131838}
Tomography of $\mathcal{E}$ to within diamond norm error $\varepsilon$ can be done using $O(rd_1d_2/\varepsilon^2)$ queries to $\mathcal{E}$.
\end{corollary}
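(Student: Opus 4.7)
The plan is to invoke \Cref{thm:main-theorem} so that it suffices to build a parallel tester for estimating the dilation isometry using queries to that isometry, and then to exhibit such a tester with the target query count.

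First, fix a Stinespring dilation $V\colon \mathbb{C}^{d_1}\to \mathbb{C}^{d_2}\otimes \mathbb{C}^{r}$ of $\mathcal{E}$; this is a genuine isometry since the Kraus rank is at most $r$. Write $\mathcal{V}(\rho):=V\rho V^\dagger$ for the associated isometry channel. Because the partial trace is contractive in the diamond norm, any isometry $\tilde V$ with $\|\mathcal{V}-\tilde{\mathcal{V}}\|_\diamond\le \varepsilon$ induces a channel $\tilde{\mathcal{E}}(\rho):=\operatorname{Tr}_{E}(\tilde V\rho \tilde V^\dagger)$ satisfying $\|\mathcal{E}-\tilde{\mathcal{E}}\|_\diamond\le \varepsilon$. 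By \Cref{thm:main-theorem}, the problem therefore reduces to constructing a parallel tester that uses $n=O(rd_1d_2/\varepsilon^2)$ queries to $V$ and outputs such a $\tilde V$.

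For the isometry-tomography subroutine I would proceed column by column. For each $i\in\{1,\dots,d_1\}$ prepare $m=O(d_2r/\varepsilon^2)$ copies of $|\psi_i\rangle:=V|i\rangle\in \mathbb{C}^{d_2r}$ by $m$ parallel queries on the input $|i\rangle$, and run an optimal pure-state tomography routine (e.g.\ Keyl's) to obtain $|\tilde\psi_i\rangle$ within trace distance $O(\varepsilon)$ of $|\psi_i\rangle$ up to a global phase. A modest additional budget of queries on superposition inputs such as $(|i\rangle+|0\rangle)/\sqrt{2}$ fixes the relative phases between the $|\tilde\psi_i\rangle$'s. The total query count is $d_1 m=O(rd_1d_2/\varepsilon^2)$, matching the target.

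The main obstacle is converting the column-wise trace distance $O(\varepsilon)$ into a diamond-norm bound of the same order. Since $\|\mathcal{V}-\tilde{\mathcal{V}}\|_\diamond\le 2\min_{\theta}\|V-e^{i\theta}\tilde V\|_{\mathrm{op}}$, it suffices to show $\|V-\tilde V\|_{\mathrm{op}}=O(\varepsilon)$ after a suitable global phase. The naive bound from summing column errors only gives $\|V-\tilde V\|_F=O(\sqrt{d_1}\,\varepsilon)$, losing a $\sqrt{d_1}$ factor. To remove this factor I would invoke matrix concentration: because the errors $|\Delta_i\rangle:=|\tilde\psi_i\rangle-|\psi_i\rangle$ arise from independent tomography runs on orthogonal inputs, they can be modeled as independent approximately centered random vectors in the tangent space of $|\psi_i\rangle$ with isotropic covariance of scale $\sim \varepsilon^2/(d_2r)$ (as follows from symmetries of Keyl's estimator). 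A matrix Bernstein inequality applied to $\Delta\Delta^\dagger=\sum_i |\Delta_i\rangle\langle \Delta_i|$ then yields $\|V-\tilde V\|_{\mathrm{op}}=\widetilde{O}(\varepsilon)$ with high probability. If this concentration step proves delicate, an alternative is to quote a sharp isometry-tomography upper bound matching the Yoshida--Miyazaki--Murao lower bound as a black box and invoke it after \Cref{thm:main-theorem}.
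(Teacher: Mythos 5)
Your proposal is correct and follows essentially the same route as the paper: reduce to isometry tomography of the dilation via \cref{thm:main-theorem} together with contractivity of the diamond norm under partial trace, then estimate the isometry column-by-column and use matrix concentration on the random error matrix to avoid the naive $\sqrt{d_1}$ loss. The paper's isometry-tomography lemma (\cref{thm-1240018}) works out exactly these details, using Vershynin's sub-gaussian operator-norm bound in place of matrix Bernstein, and fixing the $d_1$ relative phases by re-running the weak tomography on $\mathcal{V}\circ\mathcal{F}$ (with $\mathcal{F}$ the Fourier transform) rather than by querying superposition inputs directly.
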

When $r=d_1d_2$ and $\varepsilon=\Omega(1)$, the result in \cref{coro-12131838} matches the lower bound $\Omega(d_1^2 d_2^2 / \log (d_1 d_2))$ in \cite{rosenthal2024quantum} up to logarithmic factors; and when $r=O(1)$ and $d_2=(1+\Omega(1))d_1$, it matches the lower bound $\Omega((d_2-d_1)d_1/(\varepsilon^2\log 1/\varepsilon))$ in \cite{yoshida2025quantum} up to logarithmic factors. 
\cref{coro-12131838} is obtained by applying \cref{thm:main-theorem} to a diamond-norm isometry channel tomography algorithm (see \cref{thm-1240018}), which is essentially a slight modification of the $O(d^2/\varepsilon^2)$ unitary channel tomography algorithm in~\cite{haah2023query}. 

Notably, when $rd_2=d_1$ (i.e., when the quantum channel $\mathcal{E}$ can be obtained from a unitary channel followed by tracing out a subsystem), we can achieve the Heisenberg scaling $O(1/\varepsilon)$, even if $\mathcal{E}$ is not a unitary channel.
We call the parameter regime $rd_2=d_1$ as the boundary regime, since any quantum channel must satisfy the constraint $rd_2\geq d_1$.

\begin{corollary}[Channel tomography in Heisenberg scaling, \Cref{coro-12120135} restated]\label{coro-12131442}
When $rd_2=d_1$, tomography of $\mathcal{E}$ to within diamond norm error $\varepsilon$ or within Choi state trace norm error $\varepsilon$ can be done using $O(\min\{d_1^{2.5}/\varepsilon,d_1^2/\varepsilon^2\})$ or $O(d_1^2/\varepsilon)$ queries to $\mathcal{E}$, respectively.
\end{corollary}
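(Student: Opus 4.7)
The plan is to invoke \cref{thm:main-theorem} to reduce the problem to designing parallel testers that query a dilation of $\mathcal{E}$. When $rd_2 = d_1$, a minimal Stinespring dilation is an isometry $V\colon\mathbb{C}^{d_1}\to\mathbb{C}^{d_2}\otimes\mathbb{C}^{r}$ whose codomain has dimension $d_2 r = d_1$; hence it is actually a unitary $U\in U(d_1)$, and $\mathcal{E}(\rho) = \tr_E(U\rho U^\dagger)$ where $E$ is the $r$-dimensional environment subsystem. So the task becomes parallel tomography of the unitary channel $\mathcal{U}$ on $d_1$ dimensions, followed by post-processing (tracing out $E$) to produce an estimate $\tilde{\mathcal{E}}=\tr_E\circ\tilde{\mathcal{U}}$ of $\mathcal{E}$.

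The $O(d_1^2/\varepsilon^2)$ branch of the diamond-norm bound is immediate from \cref{coro-12131838} instantiated at $rd_2 = d_1$, which gives query complexity $O(rd_1d_2/\varepsilon^2) = O(d_1^2/\varepsilon^2)$ without using unitarity of the dilation at all.

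For the two Heisenberg-scaling bounds, I would construct a parallel unitary tomography routine that, using $n$ coherent queries to $\mathcal{U}$, outputs a unitary estimate $\tilde U$ whose Choi state is $O(d_1/n)$-close in trace norm to that of $U$; equivalently, $n = O(d_1^2/\varepsilon')$ queries suffice to achieve Choi-state trace norm error $\varepsilon'$. Such a routine can be obtained by reanalyzing the $\Theta(d^2/\varepsilon)$ unitary tomography algorithm of \cite{haah2023query} under the Choi-state metric, or, equivalently, by Heisenberg-scaled pure-state tomography of the Choi vector $\kett{U}/\sqrt{d_1}$ using a design that consumes multiple copies of $U$ coherently. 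Since partial trace is contractive under the trace norm, $\|\Phi_{\tilde{\mathcal{E}}}-\Phi_{\mathcal{E}}\|_1 \le \|\Phi_{\tilde{\mathcal{U}}}-\Phi_{\mathcal{U}}\|_1 \le \varepsilon'$, so choosing $\varepsilon' = \varepsilon$ establishes the $O(d_1^2/\varepsilon)$ Choi-state bound. For the diamond norm, the standard worst-case relation $\|\mathcal{U}-\tilde{\mathcal{U}}\|_\diamond \le O(\sqrt{d_1})\,\|\Phi_{\mathcal{U}}-\Phi_{\tilde{\mathcal{U}}}\|_1$ for unitary channels (which is tight up to constants, essentially because the diamond norm sees the operator-norm distance while the Choi state trace distance is controlled by $\|U-e^{i\phi}\tilde U\|_F/\sqrt{d_1}$) combined with $\varepsilon' = \varepsilon/\sqrt{d_1}$ yields $\|\mathcal{E}-\tilde{\mathcal{E}}\|_\diamond\le\varepsilon$ from $O(d_1^{2.5}/\varepsilon)$ queries. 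Applying \cref{thm:main-theorem} to each parallel tester translates the dilation-query cost back to the same number of queries to $\mathcal{E}$.

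The main obstacle is setting up the parallel, coherent unitary tomography routine with Heisenberg scaling in the Choi-state trace norm: the \cite{haah2023query} algorithm must either already be, or be repackaged into, a parallel tester so that \cref{thm:main-theorem} can be invoked. The remaining ingredients — partial-trace contractivity, metric conversion with the $\sqrt{d_1}$ factor, and instantiation of \cref{coro-12131838} — are standard and the three bounds then assemble into the claimed $\min$.
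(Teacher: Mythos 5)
Your overall reduction is right: at $rd_2=d_1$ any minimal dilation is a unitary, the $O(d_1^2/\varepsilon^2)$ branch is exactly \cref{coro-12131838}, partial-trace contractivity passes the error from $\tilde{\mathcal{U}}$ to $\tilde{\mathcal{E}}$, and the $\sqrt{d_1}$ conversion between normalized Choi trace distance and diamond norm for unitary channels is the correct exchange rate. But there is a genuine gap precisely at the point you flag as ``the main obstacle'': the $\Theta(d^2/\varepsilon)$ unitary tomography algorithm of \cite{haah2023query} is \emph{not} a parallel tester. Its $1/\varepsilon$ scaling is achieved by a sequential bootstrapping loop in which each round's queries are conditioned (via a gate applied before or after the unknown unitary) on the estimate produced by the previous round; stripped of that adaptivity, the parallel ingredient from \cite{haah2023query} is the column-wise pure-state tomography routine, which only yields $O(d^2/\varepsilon^2)$ (this is exactly what the paper's \cref{thm-1240018} generalizes to isometries). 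Since \cref{thm:main-theorem} and \cref{coro-12122150} only transfer \emph{parallel} testers from dilation access to channel access, you cannot ``repackage'' the Haah et al.\ Heisenberg-scaling algorithm into a form to which the reduction applies, and the proof does not close.

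The paper sidesteps this by instead invoking the Yang--Renner--Chiribella unitary estimation scheme~\cite{yang2020optimal}, which is genuinely parallel (all $n$ queries to $U$ are applied simultaneously to a carefully chosen entangled probe) and achieves entanglement fidelity $\mathrm{F}_{\mathrm{ent}}(\mathcal{W},\mathcal{U})\geq 1-\varepsilon^2$ using $n=O(d_1^2/\varepsilon)$ parallel queries. For unitary channels the Choi states are pure, so $\bigl\|\tfrac{1}{d}C_\mathcal{W}-\tfrac{1}{d}C_\mathcal{U}\bigr\|_1=\sqrt{1-\mathrm{F}_{\mathrm{ent}}(\mathcal{W},\mathcal{U})}\leq\varepsilon$, which gives the Choi bound directly; the diamond-norm branch then follows from $\|\mathcal{W}-\mathcal{U}\|_\diamond\leq\sqrt{2d}\sqrt{1-\mathrm{F}_{\mathrm{ent}}}$ after rescaling $\varepsilon\mapsto\varepsilon/\sqrt{d_1}$, yielding $O(d_1^{2.5}/\varepsilon)$. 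If you replace your intended call to Haah et al.\ with the Yang--Renner--Chiribella tester (and the fidelity-to-trace-distance identity for pure Choi states), the rest of your argument goes through as written.
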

\cref{coro-12131442} is obtained by applying \cref{thm:main-theorem} to the unitary estimation algorithm due to Yang, Renner, and Chiribella~\cite{yang2020optimal}, which achieves the Heisenberg scaling using parallel queries (see also \cite{kahn2007fast,yoshida2025asymptotically}).

\begin{corollary}[State tomography with state-preparation channels, \Cref{coro-12120136} restated]\label{coro-12131443}
When $rd_2=d_1$, tomography of the mixed state $\mathcal{E}(\ketbra{0}{0})$ to within trace norm error $\varepsilon$ can be done using $O(\min\{d_1^{1.5}/\varepsilon,d_1/\varepsilon^2\})$ queries to $\mathcal{E}$.
\end{corollary}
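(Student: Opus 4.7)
The plan is to reduce the mixed-state tomography of $\rho := \mathcal{E}(\ketbra{0}{0})$ to pure-state tomography of a suitable purification, and then apply \cref{thm:main-theorem} to transfer the algorithm from dilation-access to $\mathcal{E}$-access. The key observation is that when $rd_2 = d_1$, a Stinespring dilation $V\colon \mathbb{C}^{d_1}\to \mathbb{C}^{d_2}\otimes \mathbb{C}^r$ of $\mathcal{E}$ has codomain of dimension $d_2 r = d_1$, and hence can be taken to be a unitary on $\mathbb{C}^{d_1}$. Querying this dilation on $\ket{0}$ produces the $d_1$-dimensional pure state $\ket{\psi} := V\ket{0}$, a purification of $\rho$ in the sense that $\tr_r \ketbra{\psi}{\psi} = \mathcal{E}(\ketbra{0}{0}) = \rho$. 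Because partial trace is CPTP and hence contractive with respect to trace distance, any estimate $\ket{\hat\psi}$ with $\|\ketbra{\psi}{\psi} - \ketbra{\hat\psi}{\hat\psi}\|_1 \le \varepsilon$ induces an estimate $\hat\rho := \tr_r \ketbra{\hat\psi}{\hat\psi}$ with $\|\rho - \hat\rho\|_1 \le \varepsilon$; it therefore suffices to perform parallel pure-state tomography on the $d_1$-dimensional state $\ket\psi$.

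Two standard routines then yield the two bounds, each being a parallel (possibly coherent) tester querying the dilation $V$. For the $O(d_1/\varepsilon^2)$ bound, make $n = O(d_1/\varepsilon^2)$ parallel queries of $V$ on input $\ket{0}$ to obtain $n$ independent copies of $\ket\psi$, and run sample-based pure-state tomography, which is non-adaptive and achieves trace-distance error $O(\sqrt{d_1/n})$. For the Heisenberg bound $O(d_1^{1.5}/\varepsilon)$, coherently query $V$ in parallel to estimate each of the $d_1$ complex amplitudes of $\ket\psi$ to precision $O(\varepsilon/\sqrt{d_1})$ (e.g., via parallel amplitude estimation), costing $O(\sqrt{d_1}/\varepsilon)$ coherent queries per amplitude and $O(d_1^{1.5}/\varepsilon)$ in total; the $\sqrt{d_1}$ factor in the per-amplitude precision is precisely what is needed so that $\|\ket\psi - \ket{\hat\psi}\|^2 \le d_1 \cdot (\varepsilon/\sqrt{d_1})^2 = \varepsilon^2$ (up to a global phase), giving trace-distance error $O(\varepsilon)$ on the pure state and hence on $\rho$. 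Invoking \cref{thm:main-theorem} then converts each parallel tester on $V$ into a parallel tester with the same query count to $\mathcal{E}$ itself, completing the proof.

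\emph{Main obstacle.} The only delicate step is implementing the $O(d_1^{1.5}/\varepsilon)$ Heisenberg routine in a genuinely parallel fashion, since textbook Heisenberg-limited estimators (iterative phase estimation) are adaptive. This can be handled either by a block-parallel amplitude-estimation scheme that pre-allocates a fixed block of coherent queries to each basis amplitude, or by specializing the Yang--Renner--Chiribella parallel unitary-estimation construction (used already in \cref{coro-12131442}) to the easier task of estimating only the column $V\ket 0$ rather than all of $V$, which should save a $\sqrt{d_1}$ factor and produce the claimed $d_1^{1.5}/\varepsilon$. The remaining ingredients --- identifying $V\ket 0$ as a purification of $\rho$ and invoking monotonicity of trace distance under partial trace --- are routine, and the transition from dilation to $\mathcal{E}$ is handled in one stroke by \cref{thm:main-theorem}.
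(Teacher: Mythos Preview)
Your reduction matches the paper's exactly: when $rd_2=d_1$ a Stinespring dilation is a $d_1$-dimensional unitary $U$, the state $U\ket{0}$ purifies $\mathcal{E}(\ketbra{0}{0})$, an $\varepsilon$-trace-norm estimate of $U\ket{0}$ pushes through $\tr_r$ by contractivity, and \cref{thm:main-theorem} (equivalently \cref{coro-12122150}) converts dilation-queries to $\mathcal{E}$-queries. For the pure-state-estimation primitive the paper simply cites Chen's parallel algorithm~\cite{chen2025inverse} as a black box delivering both the $O(d_1/\varepsilon^2)$ and the $O(d_1^{1.5}/\varepsilon)$ bounds.

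Your $O(d_1/\varepsilon^2)$ argument via copies and sample-based pure-state tomography is fine. The gap is the Heisenberg term. ``Parallel amplitude estimation'' per coordinate does not work in the model \cref{thm:main-theorem} requires: standard amplitude estimation interleaves calls to $U$ and $U^{-1}$ (or a reflection about $U\ket{0}$), neither of which is available with forward-only parallel access to the channel $\mathcal{U}$, and even granting $U^{-1}$ the query pattern is sequential rather than parallel. Your fallback --- specializing Yang--Renner--Chiribella to estimate only the column $U\ket{0}$ --- points in the right direction, but the assertion that this ``should save a $\sqrt{d_1}$ factor'' is exactly the nontrivial content the paper outsources to~\cite{chen2025inverse}; it is not a routine corollary of~\cite{yang2020optimal} and would need its own argument here.
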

\cref{coro-12131443} is obtained by applying \cref{thm:main-theorem} to the state estimation algorithm due to Chen~\cite{chen2025inverse}, which achieves the Heisenberg scaling using parallel queries. 

\subsection{Discussion}

While preparing this manuscript, we became aware of an independent and concurrent work by Mele and Bittel~\cite{mele2025optimallearningquantumchannels}, who established the same upper bound $O(rd_1d_2/\varepsilon^2)$ for quantum channel tomography in diamond norm error. 
They also provided an explicit and non-trivial dependence on the failure probability.
We note that their method and ours are based on different techniques.
They obtain the upper bound by analyzing tomography of Choi states, which yields an explicit tomography algorithm.
In contrast, our approach is based on simulating access to dilations of quantum channels using the local test techniques, thereby reducing general channel tomography to a more tractable task---isometry channel tomography. 
In addition, we also provide upper bounds with Heisenberg scaling $O(1/\varepsilon)$ for (possibly non-unitary) channels in the boundary regime $rd_2=d_1$ (i.e., the regime in which channels admit unitary Stinespring dilations).

\subsubsection{Open questions}
\Cref{coro-12131442} shows that the error dependence can be improved from the classical scaling $O(1/\varepsilon^2)$ to the Heisenberg scaling $O(1/\varepsilon)$ in the boundary regime $rd_2= d_1$. 
This raises a new question: how does the true query complexity of quantum channel tomography behave in the near-boundary regime $rd_2\approx d_1$? We conjecture that the transition between classical and Heisenberg scalings is ``smooth'', in the sense that the complexity exhibits a mixture of these two scalings.

%In \cref{coro-12131442}, we apply \cref{thm:main-theorem} to the algorithm in \cite{yang2020optimal}, instead of the more efficient diamond-norm unitary tomography algorithm in \cite{haah2023query}. 
%This is because the algorithm in \cite{haah2023query} needs to make sequential queries instead of parallel ones. 
%A question is naturally raised: is there a better parallel algorithm for diamond-norm unitary tomography?

%In \cref{coro-12131443}, we apply \cref{thm:main-theorem} to the algorithm in \cite{chen2025inverse} instead of that in \cite{van2023quantum}, which is more efficient in query complexity. 
%This is because the algorithm in \cite{van2023quantum} needs to make sequential queries, and more importantly, requires inverse access to the unknown unitary channel.
%A question is naturally raised: is there a parallel and inverse-free algorithm that can achieve better complexity in the state estimation task using state-preparation unitaries? This question is also posed in \cite{chen2025inverse}, where the author conjectures a negative answer.

Finally, we note that this paper shows dilations do not help for parallel testers, while the conjecture in \cite{tang2025conjugate} remains open in full generality: can one prove that dilations do not help for arbitrary (e.g., sequential) testers?

\section{Preliminaries}

\subsection{Notation}
We use $\mathcal{L}(\mathcal{H})$ to denote the set of linear operators on the Hilbert space $\mathcal{H}$. Similarly, we use $\mathcal{L}(\mathcal{H}_0,\mathcal{H}_1)$ to denote the set of linear operators from $\mathcal{H}_0$ to $\mathcal{H}_1$. Given two orthonormal bases for $\mathcal{H}_0$ and $\mathcal{H}_1$ respectively, we can represent each linear operator in $\mathcal{L}(\mathcal{H}_0,\mathcal{H}_1)$ by a $\dim(\mathcal{H}_1)\times \dim(\mathcal{H}_0)$ matrix and for such a matrix \(X\), we use \(\kett{X}\in \mathcal{H}_1\otimes\mathcal{H}_0\) to denote the vector obtained by flattening the matrix $X$. It is easy to see the following facts:
\[\kett{\ketbra{\psi}{\phi}}=\ket{\psi}\ket{\phi^*}, \quad\quad\quad \kett{XYZ}=X\otimes Z^\textup{T} \kett{Y},\]
where \(\ket{\phi^*}\) is the entry-wise complex conjugate of \(\ket{\phi}\) w.r.t. to a given orthonormal basis, and $Z^\textup{T}$ is the transpose of the matrix $Z$. 
The inner product can be denoted by \(\bbrakett{X}{Y}=\tr(X^\dag Y)\).
For two linear operators $X,Y$, we use $X\sqsubseteq Y$ to denote that $Y-X$ is positive semidefinite.

\subsection{Quantum channels}

A quantum channel with input dimension $d_1$ and output dimension $d_2$ is described by a linear map $\mathcal{E}:\mathcal{L}(\mathbb{C}^{d_1})\rightarrow\mathcal{L}(\mathbb{C}^{d_2})$ such that $\mathcal{E}$ is completely positive and trace-preserving (see, e.g., \cite{NC10,watrous2018theory,hayashi2017quantum}). 
%A quantum channel has various types of representations, such as the Kraus representation and Choi-Jamio{\l}kowski representation (see, e.g., \cite{watrous2018theory,hayashi2017quantum}).

In the Kraus representation~\cite{kraus1983states}, a quantum channel $\mathcal{E}$ is written as
\[\mathcal{E}(\rho)=\sum_{i=1}^r E_i \rho E_i^\dag,\]
where $E_i: \mathbb{C}^{d_1}\rightarrow\mathbb{C}^{d_2}$ are non-zero linear operators that satisfy $\sum_{i=1}^r E_i^\dag E_i=I$, which are called Kraus operators. We can always find a set of $E_i$ such that $\tr(E_i^\dag E_j)=0$ for $i\neq j$, then those $E_i$ are called orthogonal Kraus operators and $r$ is called the Kraus rank.
Note that $r$ must satisfy the constraint $d_1/d_2\leq r\leq d_1d_2$.
In particular, a quantum channel that has Kraus rank $r=1$ is an isometry channels $\mathcal{V}=V(\cdot) V^\dag$, where $V:\mathbb{C}^{d_1}\rightarrow\mathbb{C}^{d_2}$ is an isometry operator, i.e., $V^\dag V=I_{d_1}$, and it must hold that $d_1\leq d_2$.

\begin{notation}
We use $\qchannel_{d_1,d_2}^r$ to denote the set of all quantum channels $\mathcal{E}:\mathcal{L}(\mathbb{C}^{d_1})\rightarrow \mathcal{L}(\mathbb{C}^{d_2})$ that have Kraus rank at most $r$.
In particular, we use $\isochannel_{d_1,d_2}$ to denote the set of isometry channels with input dimension $d_1$ and output dimension $d_2$, which is equivalent to $\qchannel_{d_1,d_2}^1$. %\footnote{The isometry exists only if $d_1\leq d_2$.}
\end{notation}

In the Choi-Jamio{\l}kowski representation~\cite{choi1975completely,jamiolkowski1972linear}, $\mathcal{E}$ is represented by the Choi-Jamio{\l}kowski operator 
\[C_\mathcal{E}=(\mathcal{E}\otimes \mathcal{I})(\kettbbra{I}{I})\in\mathcal{L}(\mathbb{C}^{d_2}\otimes\mathbb{C}^{d_1}),\]
where $\kett{I}=\sum\limits_{i}\ket{i}\ket{i}\in \mathbb{C}^{d_1}\otimes \mathbb{C}^{d_1}$ is an unnormalized maximally entangled state. We may simply call it the Choi operator. 
Note that we can write $C_\mathcal{E}=\sum_{i=1}^r\kettbbra{E_i}{E_i}$, where $E_i$ are orthogonal Kraus operators and thus $\kett{E_i}$ are pairwise orthogonal vectors. Therefore, the Kraus rank equals the rank of the Choi operator.

\paragraph{Stinespring dilation.}
Using the Stinespring dilation~\cite{stinespring1955positive}, we can also write a quantum channel $\mathcal{E}$ with Kraus operators $\{E_i\}_{i=1}^r$ as
\begin{equation}\label{eq-1230155}
\mathcal{E}(\rho)=\tr_{\mathcal{H}_\mathrm{anc}}(V\rho V^\dag),
\end{equation}
where $\mathcal{H}_\mathrm{anc}\cong \mathbb{C}^{r}$ and $V=\sum_{i=1}^r\ket{i}_\mathrm{anc}\otimes E_i$ is an isometry operator. An isometry channel $\mathcal{V}=V(\cdot)V^\dag$ that satisfies \cref{eq-1230155} is called a dilation of $\mathcal{E}$. 
Suppose $\mathcal{V}_1$ is a dilation of $\mathcal{E}$, then $\mathcal{V}_2$ is a dilation of $\mathcal{E}$ if and only if they differ by a unitary on $\mathcal{H}_\mathrm{anc}$, i.e., $V_2=(U\otimes I_{d_2}) V_1$ for $U:\mathcal{H}_\mathrm{anc}\rightarrow\mathcal{H}_\mathrm{anc}$ a unitary.

\begin{notation}
For a quantum channel $\mathcal{E}$ with Kraus rank at most $r$, we use $\dilation_{r}(\mathcal{E})$ to denote the set of all dilations of $\mathcal{E}$ with an ancilla system of dimension $r$. 
For an isometry channel $\mathcal{V}:\mathcal{L}(\mathcal{H}_1)\rightarrow\mathcal{L}(\mathcal{H}_2\otimes\mathcal{H}_3)$, we use $\contract_{\mathcal{H}_3}(\mathcal{V})$ to denote the quantum channel
\[\rho\mapsto \tr_{\mathcal{H}_3}(V\rho V^\dag).\]
\end{notation}

\paragraph{Haar distribution.}
%The Haar distribution on $\isochannel_{d_1,d_2}$ is defined by the following procedure: pick a Haar random unitary $U$ from $\mathbb{U}_{d_2}$, let $V$ be the first $d_1$ columns of $U$ (since $d_1\leq d_2$), and then output the isometry channel $\mathcal{V}=V(\cdot)V^\dag$. 
%It can be shown that this distribution is invariant under $\mathbb{U}_{d_1}$ and $\mathbb{U}_{d_2}$ in the following sense:
%\[\Pr[A]=\Pr[\{\mathcal{U}_2\circ\mathcal{V}\circ\mathcal{U}_1\,|\,\mathcal{V}\in A\}],\]
%for any $U_1\in\mathbb{U}_{d_1}$, $U_2\in\mathbb{U}_{d_2}$ and measurable set $A\subseteq \isochannel_{d_1,d_2}$.

%The Haar distribution on $\qchannel_{d_1,d_2}^r$ is defined by the following procedure: pick a Haar random isometry channel $\mathcal{V}$ from $\isochannel_{d_1,rd_2}$ and then output $\contract_{\mathbb{C}^{r}}(\mathcal{V})$. This distribution is also invariant under $\mathbb{U}_{d_1}$ and $\mathbb{U}_{d_2}$ in the following sense:
%\[\Pr[A]=\Pr[\{\mathcal{U}_2\circ\mathcal{E}\circ\mathcal{U}_1\,|\,\mathcal{E}\in A\}],\]
%for any $U_1\in\mathbb{U}_{d_1}$ and $U_2\in\mathbb{U}_{d_2}$, where $A\subseteq \qchannel_{d_1,d_2}^r$ is a measurable set.

Given a quantum channel $\mathcal{E}\in\qchannel_{d_1,d_2}^r$, we define the Haar distribution on $\dilation_r(\mathcal{E})$ by the following procedure: pick an arbitrary dilation $\mathcal{V}\in\dilation_r(\mathcal{E})$ and output $(\mathcal{U}\otimes \mathcal{I}_{d_2})\circ\mathcal{V}$ for a Haar random unitary $U \in\mathbb{U}_r$.
This is well defined since the resulting distribution does not depend on the choice of the dilation $\mathcal{V}$. It is easy to see that this distribution is invariant under $\mathbb{U}_{r}$ in the following sense:
\[\Pr[A]=\Pr[\{(\mathcal{U}\otimes \mathcal{I}_{d_2})\circ\mathcal{V}\,|\, \mathcal{V}\in A\}],\]
for any $U\in\mathbb{U}_r$ and measurable set $A\subseteq \dilation_r(\mathcal{E})$.

\begin{notation}
We use $\mathcal{V}\sim\dilation_r(\mathcal{E})$ and $U\sim \mathbb{U}_d$ to denote that $\mathcal{V}$ and $U$ are sampled from Haar distributions on $\dilation_r(\mathcal{E})$ and $\mathbb{U}_d$, respectively.
\end{notation}

\subsection{Formalism of quantum channel testers}
A \textit{quantum channel tester} means a quantum algorithm that can make multiple queries to an unknown quantum channel and then produces a classical output.
We adopt the quantum tester formalism based on Choi-Jamio{\l}kowski representation (see, e.g., \cite{chiribella2009theoretical,bavaresco2021strict,bavaresco2022unitary}), which provides a practical framework for studying various classes of quantum testers, such as parallel and sequential ones.

First, we define the link product ``$\star$'': 
\begin{definition}[Link product ``$\star$''~\cite{chiribella2008quantum,chiribella2009theoretical}]
\label{def-720255}
Suppose $X$ is a linear operator on $\mathcal{H}_{\bm{i}}=\mathcal{H}_{i_1}\otimes\mathcal{H}_{i_2}\otimes\cdots\otimes\mathcal{H}_{i_{n}}$ and $Y$ is a linear operator on $\mathcal{H}_{\bm{j}}=\mathcal{H}_{j_1}\otimes\mathcal{H}_{j_2}\otimes\cdots\otimes\mathcal{H}_{j_{m}}$,
where $\bm{i}=(i_1,\ldots,i_n)$ is a sequence of pairwise distinct indices, and likewise for $\bm{j}=(j_1,\ldots,j_m)$.
Let $\bm{a}=\bm{i}\cap\bm{j}$ be the set of indices in both $\bm{i}$ and $\bm{j}$ and $\bm{b}=\bm{i}\cup\bm{j}$ be the set of indices in either $\bm{i}$ or $\bm{j}$.
Then, the combination of $X$ and $Y$ is defined by
\[X\star Y= \tr_{\mathcal{H}_{\bm{a}}}\!\left(X^{\textup{T}_{\mathcal{H}_{\bm{a}}}} \cdot Y\right)=\tr_{\mathcal{H}_{\bm{a}}}\!\left(X\cdot Y^{\textup{T}_{\mathcal{H}_{{\bm{a}}}}}\right),\]
where $\mathcal{H}_{\bm{a}}$ means the tensor product of subsystems labeled by the indices in $\bm{a}$, $\textup{T}_{\mathcal{H}_{\bm{a}}}$ means the partial transpose on $\mathcal{H}_{\bm{a}}$, both $X$ and $Y$ are treated as linear operators on $\mathcal{H}_{\bm{b}}$, extended by tensoring with the identity operator as needed.
\end{definition}
\begin{remark}
The link product has many good properties: it preserves the L\"owner order: if $X,Y\sqsupseteq 0$ then $X\star Y\sqsupseteq 0$~\cite[Theorem 2]{chiribella2009theoretical}. It is commutative $X\star Y=Y \star X$, and associative $(X\star Y)\star Z=X\star (Y\star Z)$ whenever $X,Y,Z$ do not share a common subsystem (i.e., there is no subsystem that is a subsystem of all three).
Moreover, it characterizes the channel concatenation under the Choi representation: given two quantum channels $\mathcal{E}_1:\mathcal{L}(\mathcal{H}_1)\rightarrow\mathcal{L}(\mathcal{H}_2)$ and $\mathcal{E}_2:\mathcal{L}(\mathcal{H}_2)\rightarrow\mathcal{L}(\mathcal{H}_3)$, we have 
$C_{\mathcal{E}_2\circ \mathcal{E}_1}=C_{\mathcal{E}_2}\star C_{\mathcal{E}_1}$, where $C_\mathcal{E}$ denotes the Choi operator of $\mathcal{E}$.
\end{remark}

\paragraph{Parallel testers.}
Suppose a quantum channel tester uses $n$ queries to an unknown quantum channel $\mathcal{E}$. 
We label the input and output systems of the $i$-th query to $\mathcal{E}$ as $\mathcal{H}_{\mathrm{A},i}$ and $\mathcal{H}_{\mathrm{B},i}$, i.e., the $i$-th copy of the unknown channel is a linear map from $\mathcal{L}(\mathcal{H}_{\mathrm{A},i})$ to $\mathcal{L}(\mathcal{H}_{\mathrm{B},i})$.

In a parallel tester, one prepares a multipartite input state, possibly including ancilla systems, and applies the unknown channel in parallel to its subsystems, ensuring that the output of any use never interacts with the inputs of the others. After all channel uses, a single joint measurement is performed on the combined output state.

\begin{definition}[Parallel tester]
A parallel tester is a set of linear operators $\{T_i\}_i$ for $T_i\in \mathcal{L}(\bigotimes_{j=1}^n \mathcal{H}_{\mathrm{A},j}\otimes \mathcal{H}_{\mathrm{B},j})$, such that $T_i\sqsupseteq 0$ and $\sum_{i} T_i=\rho_{\mathrm{A}}\otimes I_{\mathrm{B}}$, where $I_{\mathrm{B}}$ is the identity operator on $\bigotimes_{j=1}^n\mathcal{H}_{\mathrm{B},j}$, and $ \rho_{\mathrm{A}}$ is a positive semidefinite operator on $\bigotimes_{j=1}^n\mathcal{H}_{\mathrm{A},j}$ and $\tr(\rho_\mathrm{A})=1$.
\end{definition}

When we apply a parallel tester $\{T_i\}_i$ to a quantum channel $\mathcal{E}$, we get the classical outcome $i$ with probability 
\begin{equation}\label{eq-1211852}
p_i=T_i\star C_\mathcal{E}^{\otimes n}= \tr(T_i (C_{\mathcal{E}}^{\otimes n})^\mathrm{T})=\tr(T_i^\mathrm{T} C_{\mathcal{E}}^{\otimes n}),
\end{equation}
where $C_\mathcal{E}^{\otimes n}\in\mathcal{L}(\bigotimes_{j=1}^n \mathcal{H}_{\mathrm{A},j}\otimes\mathcal{H}_{\mathrm{B},j})$
is the Choi operator of all $n$ queries to the channel $\mathcal{E}$ and $(\cdot)^{\mathrm{T}}$ denotes matrix transposition.

To see that the parallel tester $\{T_i\}_i$ can be realized by an algorithm that makes queries in parallel, we consider the following procedure:
\begin{itemize}
    \item Assume $\sum_i T_i=\rho_\mathrm{A}\otimes I_{\mathrm{B}}$. Prepare a quantum state $(\sqrt{\rho_\mathrm{A}}^\textup{T}\otimes I_{\mathrm{A}})\kett{I_{\mathrm{A}}}$ in $\bigotimes_{j=1}^n\mathcal{H}_{\mathrm{A},j}\otimes\bigotimes_{j=1}^n\mathcal{H}_{\mathrm{A},j}$. Indeed, this is a valid quantum state because
    $\bbra{I_{\mathrm{A}}} (\rho_\mathrm{A}^\mathrm{T}\otimes I_{\mathrm{A}})\kett{I_{\mathrm{A}}}=\tr(\rho_\mathrm{A}^{\textup{T}})=1$.
    
    \item Apply the quantum channel $\mathcal{I}_\mathrm{A}\otimes \mathcal{E}^{\otimes n}$ on the prepared state and obtain the mixed state $(\sqrt{\rho_\mathrm{A}}^\mathrm{T}\otimes I_{\mathrm{B}}) C_\mathcal{E}^{\otimes n} (\sqrt{\rho_\mathrm{A}}^\mathrm{T}\otimes I_{\mathrm{B}})$.

    \item Perform the POVM $\left\{(\sqrt{\rho}_\mathrm{A}^\mathrm{T}\otimes I_{\mathrm{B}})^{-1} \, T_i^\mathrm{T}\, (\sqrt{\rho}_\mathrm{A}^\mathrm{T}\otimes I_{\mathrm{B}})^{-1}\right\}_i$ and obtain the result $i$, where $(\cdot)^{-1}$ is the pseudo-inverse. Then, one can easily see that the probability of getting result $i$ is exactly that in \cref{eq-1211852}.
\end{itemize}
Conversely, any algorithm that makes queries in parallel can be described by a parallel tester.
To see this, assume that the algorithm first prepares a state $\rho$ on $(\bigotimes_{j=1}^n \mathcal{H}_{\mathrm{A},j})\otimes \mathcal{H}_\mathrm{anc}$, where $\mathcal{H}_{\mathrm{anc}}$ is an ancilla system, and then apply the channel $\mathcal{E}^{\otimes n}\otimes \mathcal{I}_\mathrm{anc}$ on $\rho$ followed by a POVM $\{E_i\}_i$, where each $E_i \in\mathcal{L}((\bigotimes_{j=1}^n \mathcal{H}_{\mathrm{B},j})\otimes\mathcal{H}_\mathrm{anc})$ is positive semidefinite. 
Then, let $T_i=E_i^\mathrm{T}\star \rho$. We can see that $\{T_i\}_i$ is a parallel tester and the probability of obtain outcome $i$ is
\[\tr(E_i\cdot (\mathcal{E}^{\otimes n}\otimes\mathcal{I}_\mathrm{anc})(\rho))=\tr(E_i \cdot (C_\mathcal{E}^{\otimes n}\star \rho))=E_i^\mathrm{T}\star C_\mathcal{E}^{\otimes n} \star \rho=T_i\star C_\mathcal{E}^{\otimes n},\]
which is exactly the same as that in \cref{eq-1211852}.

%\paragraph{Sequential testers.} In a sequential tester, one sends a quantum system through the first use of the channel $\mathcal{E}$ and then feeds the resulting output into subsequent uses, potentially along with ancillary systems, while allowing arbitrary CPTP maps to act between uses of $\mathcal{E}$. After the last use of the channel $\mathcal{E}$, a POVM is performed on the final output state. In other words, sequential testers can represent coherent and adaptive query-access algorithms.

%\begin{definition}[Sequential tester]
%A sequential tester is a set of linear operators $\{T_i\}_i$ for $T_i\in\mathcal{L}(\bigotimes_{j=1}^n \mathcal{H}_{\mathrm{A},j}\otimes\mathcal{H}_{\mathrm{B},j})$ such that $T_i\sqsupseteq 0$ and $\sum_i T_i$ is a quantum comb on $(\mathcal{H}_0,\mathcal{H}_{\mathrm{A},1},\mathcal{H}_{\mathrm{B},1},\ldots,\mathcal{H}_{\mathrm{A},n},\mathcal{H}_{\mathrm{B},n},\mathcal{H}_{n+1})$, where $\mathcal{H}_0\cong\mathcal{H}_{n+1}\cong\mathbb{C}$ are one-dimensional.
%\end{definition}
%It is known that any sequential tester can be realized by a sequential query-access algorithm and any sequential query-access algorithm can be described by a sequential tester~\cite{chiribella2009theoretical}.

\subsection{Schur-Weyl duality on bipartite systems}
Let $\mathcal{H}_1,\mathcal{H}_2,\ldots,\mathcal{H}_n$ be a sequence of Hilbert spaces such that $\mathcal{H}_i\cong\mathbb{C}^d$ for $1\leq i\leq n$.
Consider the Hilbert space \(\bigotimes_{i=1}^n\mathcal{H}_i\). This space admits representations of the symmetric group \(\mathfrak{S}_n\) (i.e., the group of all permutations on the set $\{1,2,\ldots,n\}$) and unitary group \(\mathbb{U}_d\) (i.e., the group of unitaries on $d$-dimensional Hilbert space). The unitary group acts by simultaneous ``rotation'' as \(U^{\otimes n}\) for any \(U\in \mathbb{U}_d\) and the symmetric group acts by permuting tensor factors:
\begin{equation}
\texttt{p}(\pi)\ket{\psi_1}\cdots\ket{\psi_n}=\ket{\psi_{\pi^{-1}(1)}}\cdots\ket{\psi_{\pi^{-1}(n)}},
\end{equation}
where \(\pi\in\mathfrak{S}_n\). 
Two actions \(U^{\otimes n}\) and \(\texttt{p}(\pi)\) commute with each other, and hence $\bigotimes_{i=1}^n\mathcal{H}_i$ admits a representation of group \(\mathbb{U}_d\times \mathfrak{S}_n\). More specifically, the Schur-Weyl duality (see, e.g., \cite{fulton2013representation}) states that
\begin{equation}\label{eq-1111224}
\bigotimes_{i=1}^n\mathcal{H}_i\stackrel{\mathfrak{S}_n\times\mathbb{U}_d}{\cong}\bigoplus_{\lambda \vdash_d\, n}\mathcal{P}_\lambda\otimes \mathcal{Q}^d_\lambda,
\end{equation}
where \(\mathcal{P}_\lambda\) and \(\mathcal{Q}^d_\lambda\) are irreducible representations of \(\mathfrak{S}_n\) and \(\mathbb{U}_d\) labeled by Young diagram $\lambda$, respectively. We use $\texttt{p}_\lambda(\pi)$ and $\texttt{q}_\lambda(U)$ to denote the actions of $\pi\in\mathfrak{S}_n$ and $U\in\mathbb{U}_d$ on $\mathcal{P}_\lambda$ and $\mathcal{Q}_\lambda^d$, respectively.

Now, suppose we have two sequences of Hilbert spaces $(\mathcal{H}_{\textup{A},1},\ldots,\mathcal{H}_{\textup{A},n})$ and $(\mathcal{H}_{\textup{B},1},\ldots,\mathcal{H}_{\textup{B},n})$, where $\mathcal{H}_{\textup{A},i}\cong \mathbb{C}^{d_1}$ and $\mathcal{H}_{\textup{B},j} \cong \mathbb{C}^{d_2}$. We define the action of group $\mathfrak{S}_{n}\times\mathfrak{S}_n$ on $\bigotimes_{i=1}^n \mathcal{H}_{\textup{A},i}\otimes \mathcal{H}_{\textup{B},i}$ as $\texttt{p}_\mathrm{A}(\pi_1)\otimes \texttt{p}_{\mathrm{B}}(\pi_2)$ for $(\pi_1,\pi_2)\in\mathfrak{S}_n\times\mathfrak{S}_n$, where $\texttt{p}_\mathrm{A}(\cdot)$ denotes the permutation action on $\bigotimes_{i=1}^n \mathcal{H}_{\mathrm{A},i}$ (and similarly for $\texttt{p}_\mathrm{B}(\cdot)$).
We define the action of \(\mathbb{U}_{d_1}\times \mathbb{U}_{d_2}\) on \(\bigotimes_{i=1}^n \mathcal{H}_{\textup{A},i}\otimes \mathcal{H}_{\textup{B},i}\) as $(U_\mathrm{A}\otimes U_{\textup{B}})^{\otimes n}$ for $(U_{\textup{A}},U_{\textup{B}})\in \mathbb{U}_{d_1}\times\mathbb{U}_{d_2}$. 
Note that the action of $\mathfrak{S}_n\times\mathfrak{S}_n$ commutes with the action of $\mathbb{U}_{d_1}\times\mathbb{U}_{d_2}$. Therefore, we have a Schur-Weyl duality on this bipartite system as
\[\bigotimes_{i=1}^n \mathcal{H}_{\textup{A},i}\otimes \mathcal{H}_{\textup{B},i}\,\, \stackrel{\mathfrak{S}_{n}\times\mathfrak{S}_n\times\mathbb{U}_{d_1}\times\mathbb{U}_{d_2}}{\cong}\,\,\bigoplus_{\substack{\lambda\vdash_{d_1} \, n\\ \mu\vdash_{d_2} \, n}} \mathcal{P}_{\lambda}\otimes\mathcal{P}_\mu\otimes \mathcal{Q}^{d_1}_\lambda\otimes\mathcal{Q}^{d_2}_\mu,\]
where $\mathcal{P}_\lambda\otimes\mathcal{P}_\mu\otimes\mathcal{Q}^{d_1}_\lambda\otimes\mathcal{Q}^{d_2}_\mu$ is an irreducible representation of $\mathfrak{S}_n\times\mathfrak{S}_n\times \mathbb{U}_{d_1}\times\mathbb{U}_{d_2}$.

\section{Local test of quantum channel}
In this section, we introduce and prove our main results.

\begin{theorem}\label{thm-1221159}
Let $d_1,d_2,r$ be positive integers and $rd_2 \geq d_1$.
If there exists a parallel tester $\{T_i\}_i$ that uses $n$ queries to an unknown isometry channel $\mathcal{V}\in\isochannel_{d_1,rd_2}$ and outputs a classical outcome $i$ with probability $P_i(\mathcal{V})$,
then there exists a parallel tester $\{\widetilde{T}_i\}_i\cup\{\widetilde{T}_\bot\}$, where $\bot$ is an extra irrelevant label outside the range of $i$, that uses $n$ queries to an unknown channel $\mathcal{E}\in\qchannel_{d_1,d_2}^r$, and outputs a classical outcome $i$ with probability $\E_{\mathcal{V}\sim \dilation_r(\mathcal{E})}[P_i(\mathcal{V})]$.
\end{theorem}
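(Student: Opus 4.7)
The plan is to exploit the Haar invariance over dilations together with Schur-Weyl duality on the $n$ ancilla systems. First I would fix a reference dilation $\mathcal{V}_0\in\dilation_r(\mathcal{E})$ with isometry $V_0=\sum_i\ket{i}_\mathrm{anc}\otimes E_i$, so that every $\mathcal{V}\in\dilation_r(\mathcal{E})$ takes the form $(\mathcal{U}\otimes \mathcal{I}_{d_2})\circ\mathcal{V}_0$ with $U\in\mathbb{U}_r$ Haar-distributed, and consequently $C_\mathcal{V}^{\otimes n}=(U_\mathrm{anc}^{\otimes n}\otimes I)\,C_{\mathcal{V}_0}^{\otimes n}\,(U_\mathrm{anc}^{\otimes n}\otimes I)^{\dagger}$, where ``$\mathrm{anc}$'' denotes the tensor product of the $n$ ancilla subsystems. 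Therefore $\E_{\mathcal{V}}[P_i(\mathcal{V})]=\tr\!\left(T_i^{\mathrm{T}}\,\mathcal{T}_\mathrm{anc}(C_{\mathcal{V}_0}^{\otimes n})\right)$, where $\mathcal{T}_\mathrm{anc}$ is the Haar twirl $X\mapsto \E_U[U^{\otimes n}XU^{\otimes n,\dagger}]$ on the ancilla; by self-adjointness of this twirl the expression also equals $\tr\!\left(\mathcal{T}_\mathrm{anc}(T_i^{\mathrm{T}})\cdot C_{\mathcal{V}_0}^{\otimes n}\right)$.

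The heart of the proof is what I will call the \emph{transport identity}: for every $\pi\in\mathfrak{S}_n$,
\[
\texttt{p}_\mathrm{anc}(\pi)\,\kett{V_0^{\otimes n}}\;=\;\texttt{p}_\mathrm{BA}(\pi^{-1})\,\kett{V_0^{\otimes n}},
\]
which follows directly from $\kett{V_0^{\otimes n}}=\sum_{\vec i}\bigotimes_k\ket{i_k}_\mathrm{anc}\otimes\kett{E_{i_k}}_{\mathrm{B}_k\mathrm{A}_k}$ by reindexing the summation over $\vec i$. Taking appropriate linear combinations (using that $\mathfrak{S}_n$-characters are real) yields $\Pi_\lambda^\mathrm{anc}\kett{V_0^{\otimes n}}=\Pi_\lambda^\mathrm{BA}\kett{V_0^{\otimes n}}$ for every Young diagram $\lambda\vdash n$ with $\ell(\lambda)\le r$, and hence $\Pi_\lambda^\mathrm{anc}C_{\mathcal{V}_0}^{\otimes n}\Pi_\lambda^\mathrm{anc}=\Pi_\lambda^\mathrm{BA}C_{\mathcal{V}_0}^{\otimes n}\Pi_\lambda^\mathrm{BA}$. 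Informally, this transfers the $\mathfrak{S}_n$-structure of the ancilla onto the combined input-output systems $\mathrm{BA}$, which is the key that lets an ancilla output be ``simulated'' using only queries to $\mathcal{E}$.

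Next I would apply the Schur-Weyl decomposition
\[
\mathcal{T}_\mathrm{anc}(T_i^{\mathrm{T}})\;=\;\sum_{\lambda\vdash_r n}\frac{1}{\dim\mathcal{Q}^{r}_\lambda}\,\tr_{\mathcal{Q}^{r}_\lambda}\!\left(\Pi_\lambda^\mathrm{anc}\,T_i^{\mathrm{T}}\,\Pi_\lambda^\mathrm{anc}\right)\otimes I_{\mathcal{Q}^{r}_\lambda},
\]
substitute into $\tr(\mathcal{T}_\mathrm{anc}(T_i^{\mathrm{T}})\cdot C_{\mathcal{V}_0}^{\otimes n})$, and use the transport identity term-by-term. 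Within each $\lambda$-isotypic, the $\mathcal{P}_\lambda$ factor on the ancilla is paired with the $\mathcal{P}_\lambda$ factor on $\mathrm{BA}$ via the (up to scaling unique) $\mathfrak{S}_n$-invariant vector in $\mathcal{P}_\lambda^\mathrm{anc}\otimes\mathcal{P}_\lambda^\mathrm{BA}$, after which the probability rewrites as $\tr(\widetilde{T}_i^{\mathrm{T}}\,C_\mathcal{E}^{\otimes n})$ with $\widetilde{T}_i^{\mathrm{T}}$ a weighted sum of sandwiches $\Pi_\lambda^\mathrm{BA}\cdot(\text{partial trace of }\Pi_\lambda^\mathrm{anc}T_i^{\mathrm{T}}\Pi_\lambda^\mathrm{anc})\cdot\Pi_\lambda^\mathrm{BA}$.

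Finally I would verify the tester axioms. Positivity of each $\widetilde{T}_i$ is immediate because every summand is of the form $\Pi_\lambda^\mathrm{BA}\cdot(\text{PSD})\cdot\Pi_\lambda^\mathrm{BA}$, built from $\Pi_\lambda^\mathrm{anc}T_i^{\mathrm{T}}\Pi_\lambda^\mathrm{anc}\succeq 0$ via partial trace and a positivity-preserving identification of isomorphic $\mathfrak{S}_n$-irreps. Summing over $i$ gives a weighted sum over $\lambda$ of $\Pi_\lambda^\mathrm{BA}(\rho_\mathrm{A}\otimes I_\mathrm{B})\Pi_\lambda^\mathrm{BA}$, which is dominated by $\rho_\mathrm{A}'\otimes I_\mathrm{B}$ for a suitable density operator $\rho_\mathrm{A}'$ because $I_\mathrm{B}$ is invariant under all $\mathrm{B}$-permutations, so any $\mathrm{AB}$-diagonal permutation average factorizes as $(\text{permutation average of }\rho_\mathrm{A})\otimes I_\mathrm{B}$. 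Setting $\widetilde{T}_\bot:=\rho_\mathrm{A}'\otimes I_\mathrm{B}-\sum_i\widetilde{T}_i\succeq 0$ then completes a valid parallel tester. The main obstacle is handling the multiplicity $\dim\mathcal{P}_\lambda>1$ that appears for $n\ge 3$: the identification $\mathcal{P}_\lambda^\mathrm{anc}\leftrightarrow\mathcal{P}_\lambda^\mathrm{BA}$ must simultaneously respect the two $\mathfrak{S}_n$-actions and preserve positivity, and the dimensional weights must be arranged so that the completeness relation holds, with any remaining slack absorbed into $\widetilde{T}_\bot$.
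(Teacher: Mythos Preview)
Your proposal is essentially the same approach as the paper's: Haar-twirl the tester on the ancilla, decompose via Schur--Weyl, exploit the bipartite permutation structure of $\kett{V_0}^{\otimes n}$ to transfer the $\mathcal{P}_\lambda^{\mathrm{anc}}$ data onto $\mathcal{P}_\lambda^{\mathrm{AB}}$, and then verify the parallel-tester axioms by showing $\sum_i\widetilde{T}_i\sqsubseteq \rho_\mathrm{A}'\otimes I_\mathrm{B}$ with $\rho_\mathrm{A}'$ the $\mathfrak{S}_n$-symmetrization of $\rho_\mathrm{A}$. Your ``transport identity'' $\texttt{p}_\mathrm{anc}(\pi)\kett{V_0^{\otimes n}}=\texttt{p}_\mathrm{BA}(\pi^{-1})\kett{V_0^{\otimes n}}$ is exactly the invariance that the paper packages as the decomposition $\kett{V}^{\otimes n}=\bigoplus_{\lambda}\kett{I_{\mathcal{P}_\lambda}}\otimes\ket{V_\lambda}$, and the ``unique $\mathfrak{S}_n$-invariant vector in $\mathcal{P}_\lambda^{\mathrm{anc}}\otimes\mathcal{P}_\lambda^{\mathrm{BA}}$'' you invoke is precisely that maximally entangled state $\kett{I_{\mathcal{P}_\lambda}}$.

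The one place your write-up is loose is the explicit shape of $\widetilde{T}_i$: it is not literally a sandwich $\Pi_\lambda^{\mathrm{BA}}(\cdot)\Pi_\lambda^{\mathrm{BA}}$ of a partial trace, but rather the operation $\overline{T}_i\mapsto I_{\mathcal{P}_{\mathrm{AB},\lambda}}\otimes \tr_{\mathcal{Q}^r_{\mathrm{anc},\lambda}}\bigl(\bbra{I_{\mathcal{P}_\lambda}}\overline{T}_i\kett{I_{\mathcal{P}_\lambda}}\bigr)$ with weight $(\dim\mathcal{P}_\lambda\dim\mathcal{Q}^r_\lambda)^{-1}$, i.e.\ one contracts $\mathcal{P}_{\mathrm{AB},\lambda}\otimes\mathcal{P}_{\mathrm{anc},\lambda}$ against $\kett{I_{\mathcal{P}_\lambda}}$ and then \emph{reinserts} an identity on $\mathcal{P}_{\mathrm{AB},\lambda}$. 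This is exactly the resolution of what you flag as the ``main obstacle'' for $\dim\mathcal{P}_\lambda>1$, and with it the completeness computation reduces cleanly to $\sum_i\widetilde{T}_i=\bigoplus_{\lambda\vdash_s n}\frac{1}{\dim\mathcal{P}_\lambda}I_{\mathcal{P}_{\mathrm{AB},\lambda}}\otimes\tr_{\mathcal{P}_{\mathrm{AB},\lambda}}(M_{\lambda\to\lambda})\sqsubseteq \frac{1}{n!}\sum_\pi \texttt{p}_\mathrm{A}(\pi)\rho_\mathrm{A}\texttt{p}_\mathrm{A}(\pi)^\dagger\otimes I_\mathrm{B}$.
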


We introduce the following notation.
\begin{notation}[Estimation tasks of quantum channels]
An estimation task of quantum channels in $\qchannel_{d_1,d_2}^r$ is a set $\{A_{\mathcal{E}}\}_{\mathcal{E}\in\qchannel_{d_1,d_2}^r}$, where $A_{\mathcal{E}}$ is a set of classical outcomes that are regarded as correct answers when the unknown channel is $\mathcal{E}$.
\end{notation}
As a direct application of \cref{thm-1221159}, we have the following result.
\begin{theorem}\label{coro-12122150}
Let $d_1,d_2,r$ be positive integers, $rd_2 \geq d_1$ and $\{A_{\mathcal{E}}\}_{\mathcal{E}\in\qchannel_{d_1,d_2}^r}$ be an estimation task.
If there exists a parallel tester that uses $n$ queries to an arbitrary dilation $\mathcal{V}\in \dilation_r(\mathcal{E})$ of an unknown channel $\mathcal{E}\in\qchannel_{d_1,d_2}^r$ and outputs an $i\in A_{\mathcal{E}}$ with probability at least $1-\delta$, then there exists a parallel tester that uses $n$ queries to $\mathcal{E}$ and outputs an $i\in A_{\mathcal{E}}$ with probability at least $1-\delta$.
\end{theorem}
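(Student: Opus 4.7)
The plan is to obtain this corollary as an immediate consequence of \cref{thm-1221159}, combined with a linearity-of-expectation argument. The only conceptual step is to turn the ``for every dilation $\mathcal{V}$'' quantifier in the hypothesis into an expectation over the Haar distribution on $\dilation_r(\mathcal{E})$, which the preceding theorem can then absorb. The heavy lifting is done entirely in \cref{thm-1221159}, so for this corollary I do not anticipate any real obstacle; the main care needed is bookkeeping about what the extra label $\bot$ does to the success probability.

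Concretely I would proceed as follows. Fix an arbitrary $\mathcal{E}\in\qchannel_{d_1,d_2}^r$ and let $\{T_i\}_i$ be the hypothesized parallel tester that uses $n$ queries to any dilation of $\mathcal{E}$. Since every dilation of $\mathcal{E}$ with ancilla dimension $r$ is an isometry channel in $\isochannel_{d_1,rd_2}$ (using the assumption $rd_2\geq d_1$), the hypothesis yields
\[
\sum_{i\in A_{\mathcal{E}}} P_i(\mathcal{V}) \;\geq\; 1-\delta \qquad \text{for every } \mathcal{V}\in\dilation_r(\mathcal{E}),
\]
where $P_i(\mathcal{V})$ denotes the probability that $\{T_i\}_i$ outputs $i$ on input $\mathcal{V}$. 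Taking the expectation over a Haar random dilation preserves the inequality:
\[
\E_{\mathcal{V}\sim\dilation_r(\mathcal{E})}\!\left[\sum_{i\in A_{\mathcal{E}}} P_i(\mathcal{V})\right] \;\geq\; 1-\delta.
\]

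Now I would invoke \cref{thm-1221159} directly on $\{T_i\}_i$ to obtain a parallel tester $\{\widetilde{T}_i\}_i \cup \{\widetilde{T}_\bot\}$ that makes $n$ queries to $\mathcal{E}$ itself and outputs each classical outcome $i$ with probability exactly $\E_{\mathcal{V}\sim\dilation_r(\mathcal{E})}[P_i(\mathcal{V})]$. Since $\bot\notin A_{\mathcal{E}}$ (it lies outside the range of the original labels), the probability of producing a correct answer is
\[
\sum_{i\in A_{\mathcal{E}}} \Pr[\text{outcome}=i] \;=\; \sum_{i\in A_{\mathcal{E}}} \E_{\mathcal{V}}\!\left[P_i(\mathcal{V})\right] \;=\; \E_{\mathcal{V}}\!\left[\sum_{i\in A_{\mathcal{E}}} P_i(\mathcal{V})\right] \;\geq\; 1-\delta,
\]
by exchanging the finite sum with the expectation. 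This gives precisely the parallel tester required by the statement, so the corollary follows. The only subtlety worth flagging is ensuring that the extra label $\bot$ does not ``steal'' probability from $A_{\mathcal{E}}$, which is immediate because $\bot$ is defined to be disjoint from $A_{\mathcal{E}}$.
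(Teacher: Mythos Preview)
Your proposal is correct and follows essentially the same approach as the paper: invoke \cref{thm-1221159} to obtain a tester on $\mathcal{E}$ whose output probabilities are Haar averages of $P_i(\mathcal{V})$, then sum over $i\in A_{\mathcal{E}}$ and use linearity of expectation together with the pointwise bound $\sum_{i\in A_{\mathcal{E}}} P_i(\mathcal{V})\geq 1-\delta$. Your additional remark that $\bot\notin A_{\mathcal{E}}$ is a nice clarification, but otherwise the arguments coincide.
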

\begin{proof}
Let $P_i(\mathcal{V})$ be the probability of the parallel tester outputting $i$ conditioned on making queries to $\mathcal{V}$.
Therefore, we have $\sum_{i\in A_{\mathcal{E}}}P_i(\mathcal{V})\geq 1-\delta$ for any $\mathcal{V}\in\dilation_r(\mathcal{E})$.
By \cref{thm-1221159}, there exists a parallel tester that uses $n$ queries to $\mathcal{E}$ and the probability of outputting $i$ is
\[\widetilde{P}_i(\mathcal{E})=\E_{\mathcal{V}\sim \dilation_r(\mathcal{E})}[P_i(\mathcal{V})].\]
Thus, the probability of outputting an $i\in A_{\mathcal{E}}$ is
\[\sum_{i\in A_{\mathcal{E}}}\widetilde{P}_i(\mathcal{E})=\E_{\mathcal{V}\sim \dilation_r(\mathcal{E})}\left[\sum_{i\in A_\mathcal{E}} P_i(\mathcal{V})\right]\geq 1-\delta.\]
\end{proof}

\subsection{Channel tomography and estimation}
Using \cref{coro-12122150}, we can obtain the following results.
\begin{corollary}\label{coro-12120134}
There exists a parallel tester that uses $O(rd_1d_2/\varepsilon^2)$ queries to an unknown channel $\mathcal{E}\in\qchannel_{d_1,d_2}^r$ and outputs an estimate $\mathcal{F}$ such that $\|\mathcal{F}-\mathcal{E}\|_{\diamond}\leq \varepsilon$ with probability at least $2/3$, where $\|\cdot\|_{\diamond}$ is the diamond norm.
\end{corollary}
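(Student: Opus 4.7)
The plan is to reduce channel tomography to isometry-channel tomography on a Stinespring dilation, and then invoke \cref{coro-12122150} to eliminate the dilation access.

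First I would use the isometry-channel tomography algorithm \cref{thm-1240018} of this paper, which (by the small modification of the $O(d^2/\varepsilon^2)$ unitary-channel tomographer of Haah--Kothari--O'Donnell--Tang mentioned in the introduction) uses, on an unknown isometry in $\isochannel_{D_1,D_2}$, only
\[O\!\parens*{\frac{D_1 D_2}{\varepsilon^2}}\]
parallel queries and outputs a description of an isometry channel at diamond-norm distance at most $\varepsilon$ with probability at least $2/3$. Assume hypothetical access to an arbitrary dilation $\mathcal{V}\in\dilation_r(\mathcal{E})$; this is an isometry channel in $\isochannel_{d_1,rd_2}$, so applying the algorithm with $D_1=d_1$ and $D_2=rd_2$ uses $O(rd_1d_2/\varepsilon^2)$ queries and returns a description of some $\mathcal{W}\in\isochannel_{d_1,rd_2}$ with $\|\mathcal{W}-\mathcal{V}\|_\diamond\leq\varepsilon$ w.p.\ $\geq 2/3$. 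As classical post-processing, I would output the description of $\mathcal{F}=\contract_{\mathbb{C}^r}(\mathcal{W})\in\qchannel_{d_1,d_2}^r$.

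Since $\mathcal{E}=\contract_{\mathbb{C}^r}(\mathcal{V})$ for every $\mathcal{V}\in\dilation_r(\mathcal{E})$, and the partial trace is a CPTP map, post-composition with it contracts the diamond norm, giving
\[\|\mathcal{F}-\mathcal{E}\|_\diamond=\|\contract_{\mathbb{C}^r}(\mathcal{W})-\contract_{\mathbb{C}^r}(\mathcal{V})\|_\diamond\leq\|\mathcal{W}-\mathcal{V}\|_\diamond\leq\varepsilon\]
whenever the isometry tomographer succeeds. Crucially, this success criterion is phrased purely in terms of $\mathcal{E}$ and does not depend on which dilation $\mathcal{V}$ was chosen, so defining the estimation task by $A_\mathcal{E}=\set{i}{\|\mathcal{F}_i-\mathcal{E}\|_\diamond\leq\varepsilon}$, the above dilation-access procedure outputs $i\in A_\mathcal{E}$ with probability $\geq 2/3$ for every $\mathcal{V}\in\dilation_r(\mathcal{E})$. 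Applying \cref{coro-12122150} with $\delta=1/3$ then directly yields a parallel tester using the same $O(rd_1d_2/\varepsilon^2)$ queries to $\mathcal{E}$ itself and succeeding with probability $\geq 2/3$, which is the claim.

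The only mildly delicate point is arranging that the outcome labels of the dilation-access tester already name channels in $\qchannel_{d_1,d_2}^r$ rather than isometries in $\isochannel_{d_1,rd_2}$, so that the success event refers to $\mathcal{E}$; this is why the $\contract_{\mathbb{C}^r}$ step is pushed into the classical post-processing before invoking \cref{coro-12122150}. The extra $\bot$ label produced by the construction behind \cref{coro-12122150} simply lies outside $A_\mathcal{E}$ and is absorbed into the $1/3$ failure budget. The substantive mathematical content is entirely packaged in \cref{thm-1240018} (isometry tomography) and \cref{coro-12122150} (dilations do not help for parallel testers); the rest is bookkeeping.
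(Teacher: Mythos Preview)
Your proposal is correct and essentially identical to the paper's proof: both define the task $A_{\mathcal{E}}=\{\mathcal{F}:\|\mathcal{F}-\mathcal{E}\|_\diamond\leq\varepsilon\}$, run the isometry tomographer of \cref{thm-1240018} on a dilation $\mathcal{V}\in\isochannel_{d_1,rd_2}$ with $O(rd_1d_2/\varepsilon^2)$ queries, post-process the output $\mathcal{W}$ to $\contract_r(\mathcal{W})$, use contractivity of the diamond norm under partial trace to get $\|\contract_r(\mathcal{W})-\mathcal{E}\|_\diamond\leq\|\mathcal{W}-\mathcal{V}\|_\diamond\leq\varepsilon$, and then invoke \cref{coro-12122150}. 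Your remark about relabeling outcomes via $\contract_{\mathbb{C}^r}$ before applying \cref{coro-12122150} exactly matches what the paper does.
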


%The proof of \Cref{coro-12120134} relies on the following lemma about isometry channel tomography, which naturally extends an algorithm in~\cite{haah2023query} for unitary channel tomography.
%The proof of the lemma is deferred to \Cref{sec:isometry_channel_tomography} for readability.

\begin{proof}%[Proof of \Cref{coro-12120134}]

First, we define the estimation task $\{A_{\mathcal{E}}\}_{\mathcal{E}\in\qchannel_{d_1,d_2}^r}$ as
\[A_{\mathcal{E}}=\left\{\mathcal{F}\in\qchannel_{d_1,d_2}^r \,\,\big|\,\, \|\mathcal{F}-\mathcal{E}\|_\diamond \leq \varepsilon\right\}.\]
Note that any dilation in $\dilation_r(\mathcal{E})$ for $\mathcal{E}\in\qchannel_{d_1,d_2}^r$ is an isometry channel in $\isochannel_{d_1,rd_2}$.
Then, by \cref{thm-1240018}, we have a parallel tester that uses $n=O(rd_1d_2/\varepsilon^2)$ queries to a dilation $\mathcal{V}\in \dilation_r(\mathcal{E})$ and outputs $\mathcal{W}$ such that with probability at least $2/3$, we have 
\[\|\contract_r(\mathcal{W})-\mathcal{E}\|_\diamond \leq \|\mathcal{W}-\mathcal{V}\|_\diamond\leq \varepsilon,\]
where the first inequality is due to the contractivity of the diamond norm.
Let the tester output $\contract_r(\mathcal{W})$ upon getting $\mathcal{W}$. Then it can solve the task $\{A_{\mathcal{E}}\}_{\mathcal{E}\in\qchannel_{d_1,d_2}^r}$ using $n$ queries to an arbitrary dilation of $\mathcal{E}$. 
Then, by \cref{coro-12122150}, there exists a parallel tester that can also solve this task using $n$ queries to $\mathcal{E}$.

\end{proof}

Our main result can also provide the Heisenberg scaling $O(1/\varepsilon)$ for (non-unitary) quantum channel estimation tasks. 
The first example is the average-case distance tomography of quantum channels.
\begin{corollary}\label{coro-12120135}
Let $rd_2=d_1$.
There exists a parallel tester that uses $O(d_1^2/\varepsilon)$ queries to a quantum channel $\mathcal{E}\in\qchannel_{d_1,d_2}^r$ and outputs an estimate $\mathcal{F}$ such that $\|\frac{1}{d}C_\mathcal{F}-\frac{1}{d}C_\mathcal{E}\|_{1}\leq \varepsilon$ with probability at least $2/3$, where $C_{\mathcal{E}}$ denotes the unnormalized Choi operator of $\mathcal{E}$ and $\|\cdot\|_1$ is the trace norm.

There also exists a parallel tester that uses $O(\min\{d_1^{2.5}/\varepsilon,d_1^2/\varepsilon^2\})$ queries to a quantum channel $\mathcal{E}\in\qchannel_{d_1,d_2}^r$ and outputs an estimate $\mathcal{F}$ such that $\|\mathcal{F}-\mathcal{E}\|_{\diamond}\leq \varepsilon$ with probability at least $2/3$, where $\|\cdot\|_\diamond$ is the diamond norm.
\end{corollary}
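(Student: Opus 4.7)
The plan is to apply \cref{coro-12122150} to the optimal unitary tomography algorithm of Yang, Renner, and Chiribella~\cite{yang2020optimal}. The key structural observation is that when $rd_2 = d_1$, any dilation $\mathcal{V} \in \dilation_r(\mathcal{E})$ is an isometry from $\mathbb{C}^{d_1}$ into $\mathbb{C}^r \otimes \mathbb{C}^{d_2} \cong \mathbb{C}^{d_1}$, hence in fact a unitary channel on $\mathbb{C}^{d_1}$. Thus any universal unitary tomography algorithm applies directly to $\mathcal{V}$, and \cref{coro-12122150} converts the resulting dilation-based tester back into one that queries $\mathcal{E}$ itself.

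For the Choi-state trace norm bound I would invoke Yang--Renner--Chiribella as a parallel tester using $O(d_1^2/\varepsilon)$ queries to $\mathcal{V}$ and outputting a unitary channel $\mathcal{W}$ with $\|\tfrac{1}{d_1} C_\mathcal{W} - \tfrac{1}{d_1} C_\mathcal{V}\|_1 \leq \varepsilon$ with probability at least $2/3$, then post-process to output $\mathcal{F} = \contract_r(\mathcal{W})$. Because $C_{\contract_r(\mathcal{W})}$ and $C_\mathcal{E}$ arise from $C_\mathcal{W}$ and $C_\mathcal{V}$ by the same partial trace over the ancilla, contractivity of the trace distance preserves the error bound, and \cref{coro-12122150} yields the stated tester acting directly on $\mathcal{E}$.

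For the diamond norm bound I would take the minimum of two strategies: (i) invoke \cref{coro-12120134} with Kraus rank $r = d_1/d_2$, giving $O(rd_1d_2/\varepsilon^2) = O(d_1^2/\varepsilon^2)$ queries; and (ii) rerun Yang--Renner--Chiribella with target Choi-state trace error $\varepsilon/\sqrt{d_1}$, at cost $O(d_1^{2.5}/\varepsilon)$ queries, then upgrade to a diamond norm guarantee for the unitary channels $\mathcal{V}, \mathcal{W}$ on $\mathbb{C}^{d_1}$ via the chain $\|\mathcal{V} - \mathcal{W}\|_\diamond \leq 2\|V - W\|_\infty \leq 2\|V - W\|_F \leq O(\sqrt{d_1})\,\|\tfrac{1}{d_1}C_\mathcal{V} - \tfrac{1}{d_1}C_\mathcal{W}\|_1$, after aligning the global phase of $W$ so that $\tr(V^\dag W) \geq 0$. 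Contractivity of the diamond norm under $\contract_r$ transfers this bound to $\mathcal{E}$, and \cref{coro-12122150} completes the reduction in each case.

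The main technical subtlety is the $\sqrt{d_1}$ conversion factor in strategy (ii). It arises from passing between a normalized Hilbert--Schmidt-type quantity (pure-state overlap of Choi states, controlled by $\|V - W\|_F^2 = 2d_1(1 - \Real\tfrac{1}{d_1}\tr(V^\dag W))$) and an operator-norm-type quantity (the worst-case input distance defining the diamond norm), and reflects the elementary inequality $\|X\|_\infty \leq \|X\|_F \leq \sqrt{d_1}\|X\|_\infty$ applied to $X = V - W$. Everything else is a routine combination of \cref{coro-12122150}, contractivity of the partial trace and diamond norm under quantum channels, and the quoted Yang--Renner--Chiribella upper bound.
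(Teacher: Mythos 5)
Your proposal is correct and follows essentially the same route as the paper: apply Yang--Renner--Chiribella to the unitary dilation (which exists because $rd_2 = d_1$), pass the error bound through $\contract_r$ by contractivity of trace norm and diamond norm, and then invoke \cref{coro-12122150}; for the diamond-norm bound, take the minimum with \cref{coro-12120134}. The only cosmetic difference is that where the paper black-boxes the conversion $\|\mathcal{W}-\mathcal{U}\|_\diamond\leq \sqrt{2d}\sqrt{1-\mathrm{F}_{\mathrm{ent}}(\mathcal{W},\mathcal{U})}$ to \cite[Proposition 1.9]{haah2023query}, you derive the equivalent $\sqrt{d_1}$ overhead directly from the chain $\|\mathcal{V}-\mathcal{W}\|_\diamond \lesssim \|V-W\|_\infty \leq \|V-W\|_F = \sqrt{2d_1(1-\Real\tfrac{1}{d_1}\tr(V^\dag W))}$ together with phase alignment, which is a valid and slightly more self-contained presentation of the same calculation.
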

\begin{proof}
We define the task $\{A_{\mathcal{E}}\}_{\mathcal{E}\in\qchannel_{d_1,d_2}^r}$ as
\[A_{\mathcal{E}}=\left\{\mathcal{F}\in\qchannel_{d_1,d_2}^r\,\,\bigg|\,\, \left\|\frac{1}{d}C_{\mathcal{F}}-\frac{1}{d}C_{\mathcal{E}}\right\|_1\leq\varepsilon\right\}.\]
Note that any dilation in $\dilation_r(\mathcal{E})$ for $\mathcal{E}\in\qchannel_{d_1,d_2}^r$ is a unitary channel in $\isochannel_{d_1,d_1}$.
By Yang-Renner-Chiribella algorithm~\cite{yang2020optimal}, we have a parallel tester that uses $n=O(d_1^2/\varepsilon)$ queries to a unitary dilation $\mathcal{U}\in\dilation_r(\mathcal{E})$ and outputs $\mathcal{W}$ such that with probability at least $2/3$, we have 
\[\left\|\frac{1}{d}C_{\contract_r(\mathcal{W})}-\frac{1}{d}C_{\mathcal{E}}\right\|_1\leq \left\|\frac{1}{d}C_\mathcal{W}-\frac{1}{d}C_{\mathcal{U}}\right\|_1=\sqrt{1-\mathrm{F}_{\mathrm{ent}}(\mathcal{W},\mathcal{U})}\leq \varepsilon,\]
where the first inequality is by the contractivity of trace norm, and the last inequality is because the Yang-Renner-Chiribella algorithm will output an estimate $\mathcal{W}$ with entanglement fidelity $\mathrm{F}_{\mathrm{ent}}(\mathcal{W},\mathcal{U})\geq 1-\varepsilon^2$, and with probability at least $2/3$.
Let the tester output $\contract_r(\mathcal{W})$ upon getting $\mathcal{W}$. Then it can solve the task $\{A_{\mathcal{E}}\}_{\mathcal{E}\in\qchannel_{d_1,d_2}^r}$ using $n$ queries to an arbitrary dilation of $\mathcal{E}$. 
Then, by \cref{coro-12122150}, there exists a parallel tester that can also solve this task using $n$ queries to $\mathcal{E}$.

Similarly, we define the task $\{A_{\mathcal{E}}\}_{\mathcal{E}\in\qchannel_{d_1,d_2}^r}$ as
\[A_{\mathcal{E}}=\left\{\mathcal{F}\in\qchannel_{d_1,d_2}^r\,\,\big|\,\, \left\|\mathcal{F}-\mathcal{E}\right\|_\diamond\leq\varepsilon\right\}.\]
By Yang-Renner-Chiribella algorithm~\cite{yang2020optimal}, we have a parallel tester that uses $n=O(d_1^2/(\varepsilon/\sqrt{d}))$ queries to a unitary dilation $\mathcal{U}\in\dilation_r(\mathcal{E})$ and outputs $\mathcal{W}$ such that with probability at least $2/3$, we have 
\[\|\contract_r(\mathcal{W})-\mathcal{E}\|_\diamond\leq \|\mathcal{W}-\mathcal{U}\|_\diamond\leq \sqrt{2d}\sqrt{1-\mathrm{F}_{\mathrm{ent}}(\mathcal{W},\mathcal{U})}\leq \varepsilon,\]
where the second inequality is due to \cite[Proposition 1.9]{haah2023query}.
Let the tester output $\contract_r(\mathcal{W})$ upon getting $\mathcal{W}$. 
Then it can solve the task $\{A_{\mathcal{E}}\}_{\mathcal{E}\in\qchannel_{d_1,d_2}^r}$ using $n$ queries to an arbitrary dilation of $\mathcal{E}$. 
Then, by \cref{coro-12122150}, there exists a parallel tester that can also solve this task using $n$ queries to $\mathcal{E}$. Then, combining this result with \cref{coro-12120134}, we know that $O(\min\{d_1^{2.5}/\varepsilon,d_1^2/\varepsilon^2\})$ queries suffice.
\end{proof}

Another example is the (mixed) state tomography using state-preparation channels.
\begin{corollary}\label{coro-12120136}
Let $rd_2=d_1$. There exists a parallel tester that uses $O(\min\{d_1^{1.5}/\varepsilon,d_1/\varepsilon^2\})$ queries to a quantum channel $\mathcal{E}\in\qchannel_{d_1,d_2}^r$ and outputs an estimate $\mathcal{F}(\ketbra{0}{0})$ such that $\|\mathcal{F}(\ketbra{0}{0})-\mathcal{E}(\ketbra{0}{0})\|_1\leq \varepsilon$ with probability at least $2/3$, where $\|\cdot\|_1$ is the trace norm.
\end{corollary}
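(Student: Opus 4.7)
The plan is to follow the template established by the proofs of \Cref{coro-12120134,coro-12120135}: phrase tomography of $\mathcal{E}(\ketbra{0}{0})$ as an estimation task $\{A_\mathcal{E}\}_{\mathcal{E}\in\qchannel_{d_1,d_2}^r}$, design a parallel tester that works against an arbitrary dilation $\mathcal{V}\in\dilation_r(\mathcal{E})$, and invoke \Cref{coro-12122150} to translate it into a tester that queries $\mathcal{E}$ directly. Concretely, I take
\[A_\mathcal{E}=\left\{\sigma\in\mathcal{L}(\mathbb{C}^{d_2})\,\middle|\, \sigma\sqsupseteq 0,\ \tr(\sigma)=1,\ \|\sigma-\mathcal{E}(\ketbra{0}{0})\|_1\leq\varepsilon\right\},\]
and exploit the hypothesis $rd_2=d_1$, which forces every element of $\dilation_r(\mathcal{E})$ to be a unitary channel $\mathcal{U}\in\isochannel_{d_1,d_1}$ acting on $\mathcal{H}_\mathrm{anc}\otimes\mathcal{H}_\mathrm{B}\cong\mathbb{C}^{r}\otimes\mathbb{C}^{d_2}$ with $\mathcal{E}(\ketbra{0}{0})=\tr_\mathrm{anc}(U\ketbra{0}{0}U^\dag)$. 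So the dilation problem is exactly: prepare and tomograph the pure state $U\ket{0}\in\mathbb{C}^{d_1}$ using a state-preparation unitary $\mathcal{U}$.

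For the Heisenberg term $O(d_1^{1.5}/\varepsilon)$, I would invoke the state-preparation algorithm of Chen~\cite{chen2025inverse}, which makes $n=O(d_1^{1.5}/\varepsilon)$ parallel queries to $\mathcal{U}$ and outputs a classical description of a pure state $\ket{\phi}$ with $\|\ketbra{\phi}{\phi}-U\ketbra{0}{0}U^\dag\|_1\leq\varepsilon$ with probability at least $2/3$. By the contractivity of the trace norm under partial trace,
\[\left\|\tr_\mathrm{anc}(\ketbra{\phi}{\phi})-\mathcal{E}(\ketbra{0}{0})\right\|_1 \leq \left\|\ketbra{\phi}{\phi}-U\ketbra{0}{0}U^\dag\right\|_1\leq \varepsilon,\]
so the post-processed output $\tr_\mathrm{anc}(\ketbra{\phi}{\phi})$ lies in $A_\mathcal{E}$ against any $\mathcal{U}\in\dilation_r(\mathcal{E})$, and \Cref{coro-12122150} promotes this to a parallel tester of the same query cost that talks to $\mathcal{E}$ itself. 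For the classical term $O(d_1/\varepsilon^2)$, the simplest route is to bypass dilations entirely: each query to $\mathcal{E}$ on input $\ket{0}$ yields an independent copy of the $d_2$-dimensional state $\mathcal{E}(\ketbra{0}{0})$, to which optimal i.i.d.\ mixed-state tomography~\cite{Haah_2017,10.1145/2897518.2897544} can be applied with $O(d_2/\varepsilon^2)\leq O(d_1/\varepsilon^2)$ samples. Taking the smaller of the two bounds gives the claimed $O(\min\{d_1^{1.5}/\varepsilon,d_1/\varepsilon^2\})$.

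The only nontrivial point I expect is making sure that the algorithm of \cite{chen2025inverse} is genuinely a \emph{parallel} tester in the sense of this paper, since \Cref{coro-12122150} is stated only in the parallel model; if the cited algorithm is presented sequentially or adaptively, one first has to convert it to a purely parallel form (which is standard for the $O(d^{1.5}/\varepsilon)$ pure-state learning bound, since a parallel tester can implement Heisenberg-limited phase estimation on a single $n$-fold tensor power $\mathcal{U}^{\otimes n}$) before feeding it into our reduction.
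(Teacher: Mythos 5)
Your plan for the Heisenberg term $O(d_1^{1.5}/\varepsilon)$ is essentially identical to the paper's: define the estimation task, observe that dilations are unitary channels when $rd_2=d_1$, apply Chen's state-preparation tomography to estimate $U\ket{0}$, contract by tracing out the ancilla using trace-norm contractivity, and invoke \Cref{coro-12122150}. The paper, however, obtains \emph{both} bounds from Chen's algorithm, which is already stated as achieving $O(\min\{d_1^{1.5}/\varepsilon,d_1/\varepsilon^2\})$ queries to the unitary, so there is no need for a separate argument for the $O(d_1/\varepsilon^2)$ term.

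Your alternative route for the $O(d_1/\varepsilon^2)$ term --- bypassing the dilation machinery and directly running i.i.d.\ mixed-state tomography on the copies $\mathcal{E}(\ketbra{0}{0})$ --- is a genuinely different and more elementary argument, and it does work, but the way you cite the sample complexity is off. Optimal mixed-state tomography of a $d_2$-dimensional state costs $O(d_2\cdot\operatorname{rank}/\varepsilon^2)$ copies, not $O(d_2/\varepsilon^2)$. To recover the desired $O(d_1/\varepsilon^2)$ you need to also observe that $\mathcal{E}(\ketbra{0}{0})=\tr_{\mathrm{anc}}(U\ketbra{0}{0}U^\dag)$ is the reduced state of a pure state on $\mathbb{C}^r\otimes\mathbb{C}^{d_2}$ and therefore has rank at most $\min\{r,d_2\}$, which gives $O(d_2\min\{r,d_2\}/\varepsilon^2)\leq O(rd_2/\varepsilon^2)=O(d_1/\varepsilon^2)$. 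With that fix the direct argument is fine, and it has the advantage of not relying on whether Chen's algorithm is presented in parallel form (though, as you note, the Heisenberg term still requires that caveat and the paper does implicitly assume it). Either approach is acceptable; the paper's is shorter because it delegates both regimes to a single citation.
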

\begin{proof}
We define the task $\{A_{\mathcal{E}}\}_{\mathcal{E}\in\qchannel_{d_1,d_2}^r}$ as
\[A_{\mathcal{E}}=\left\{\rho\in \mathcal{L}(\mathbb{C}^{d_2}) \,\,\big|\,\, \left\|\rho-\mathcal{E}(\ketbra{0}{0})\right\|_1\leq\varepsilon\right\}.\]
Note that any dilation in $\dilation_r(\mathcal{E})$ for $\mathcal{E}\in\qchannel_{d_1,d_2}^r$ is a unitary channel in $\isochannel_{d_1,d_1}$.
By Chen's algorithm~\cite{chen2025inverse}, we have a parallel tester that uses $n=O(\min\{d_1^{1.5}/\varepsilon,d_1/\varepsilon^2\})$ queries to a unitary dilation $\mathcal{U}\in\dilation_r(\mathcal{E})$ and outputs $\ket{\psi}$ such that with probability at least $2/3$, we have 
\[\|\tr_{r}(\ketbra{\psi}{\psi})-\mathcal{E}(\ketbra{0}{0})\|_1\leq \|\ketbra{\psi}{\psi}-U\ketbra{0}{0}U^\dag\|_1\leq \varepsilon,\]
where the second inequality is because Chen's algorithm will output an estimate $\ket{\psi}$ for $U\ket{0}$~\footnote{In \cite{chen2025inverse}, the author considered estimating $U\ket{d}$ for notation convenience, here we consider estimating $U\ket{0}$.} to within trace norm error $\varepsilon$ with probability at least $2/3$.
Let the tester output $\tr_r(\ketbra{\psi}{\psi})$ upon getting $\ket{\psi}$.
Then it can solve the task $\{A_{\mathcal{E}}\}_{\mathcal{E}\in\qchannel_{d_1,d_2}^r}$ using $n$ queries to an arbitrary dilation of $\mathcal{E}$. 
Then, by \cref{coro-12122150}, there exists a parallel tester that can also solve this task using $n$ queries to $\mathcal{E}$.
\end{proof}

\subsection{Construction of the local testers}
Here, we prove \cref{thm-1221159}. Our proof follows a similar idea to the construction of local testers for quantum states in \cite{chen2024local}, while generalizing it to testers for quantum channels. 

First, we define some notation that will be used in this section.
\begin{notation}
For $i\in [n]$, let $\mathcal{H}_{\mathrm{A},i}\cong\mathbb{C}^{d_1}$ and $\mathcal{H}_{\mathrm{B},i}\otimes\mathcal{H}_{\mathrm{anc},i}\cong\mathbb{C}^{d_2}\otimes\mathbb{C}^{r}$ label the input and output subsystems of the $i$-th query to the unknown isometry channel $\mathcal{V}\in\isochannel_{d_1,rd_2}$.
We have the following decompositions by Schur-Weyl duality
\[\bigotimes_{i=1}^n \mathcal{H}_{\mathrm{A},i}\otimes\mathcal{H}_{\mathrm{B},i}\stackrel{\mathfrak{S}_n \times \mathbb{U}_{d_1d_2}}{\cong} \bigoplus_{\lambda\vdash_{d_1d_2} n}\mathcal{P}_{\lambda}\otimes\mathcal{Q}^{d_1d_2}_{\lambda},\]
and
\[\bigotimes_{i=1}^n\mathcal{H}_{\mathrm{anc},i} \stackrel{\mathfrak{S}_n\times \mathbb{U}_{r}}{\cong}\bigoplus_{\lambda\vdash_{r} n}\mathcal{P}_\lambda\otimes\mathcal{Q}^r_\lambda.\]
Therefore, we have
\[\bigotimes_{i=1}^n \mathcal{H}_{\mathrm{A},i}\otimes\mathcal{H}_{\mathrm{B},i}\otimes\mathcal{H}_{\mathrm{anc},i}\stackrel{\mathfrak{S}_n\times\mathfrak{S}_n\times\mathbb{U}_{d_1d_2}\times\mathbb{U}_r}{\cong}\bigoplus_{\substack{\lambda\vdash_{d_1d_2}n\\ \mu\vdash_r n}}\mathcal{P}_{\mathrm{AB},\lambda}\otimes\mathcal{P}_{\mathrm{anc},\mu}\otimes\mathcal{Q}^{d_1d_2}_{\mathrm{AB},\lambda}\otimes\mathcal{Q}^r_{\mathrm{anc},\mu},\]
where $\mathcal{P}_{\mathrm{AB},\lambda}\otimes \mathcal{Q}^{d_1d_2}_{\mathrm{AB},\lambda}$ denotes the subspace $\mathcal{P}_{\lambda}\otimes \mathcal{Q}^{d_1d_2}_{\lambda}$ in $\bigotimes_{i=1}^n\mathcal{H}_{\mathrm{A},i}\otimes\mathcal{H}_{\mathrm{B},i}$, and $\mathcal{P}_{\mathrm{anc},\mu}\otimes \mathcal{Q}^r_{\mathrm{anc},\mu}$ denotes the subspace $\mathcal{P}_\mu\otimes\mathcal{Q}^r_\mu$ in $\bigotimes_{i=1}^n \mathcal{H}_{\mathrm{anc},i}$.
%For any linear operator $M\in\mathcal{L}(\bigotimes_{j=1}^n \mathcal{H}_{\mathrm{A},j}\otimes\mathcal{H}_{\mathrm{B},j}\otimes\mathcal{H}_{\mathrm{anc},j})$, we can write it under this Schur-Weyl basis as
%\[M= \bigoplus_{\substack{\lambda\vdash_{d_1d_2}n\\ \mu\vdash_r n}}\,\bigoplus_{\substack{\nu\vdash_{d_1d_2}n\\ \kappa\vdash_r n}} M^{(\lambda,\mu)\rightarrow(\nu,\kappa)},\]
%where $M^{(\lambda,\mu)\rightarrow(\nu,\kappa)}$ is a linear operator from $\mathcal{P}_{\mathrm{AB},\lambda}\otimes\mathcal{P}_{\mathrm{anc},\mu}\otimes\mathcal{Q}_{\mathrm{AB},\lambda}\otimes\mathcal{Q}_{\mathrm{anc},\mu}$ to $\mathcal{P}_{\mathrm{AB},\nu}\otimes\mathcal{P}_{\mathrm{anc},\kappa}\otimes\mathcal{Q}_{\mathrm{AB},\nu}\otimes\mathcal{Q}_{\mathrm{anc},\kappa}$. 
\end{notation}

Then, we provide the proof of \cref{thm-1221159}.

\begin{proof}[Proof of \cref{thm-1221159}]
Let $s\coloneqq \min\{r,d_1d_2\}$. Note that here we do not assume $r\leq d_1d_2$ (though the Kraus rank of a channel $\mathcal{E}\in\qchannel_{d_1,d_2}^{r}$ is at most $d_1d_2$).
Our construction of the tester $\{\widetilde{T}_i\}_i\cup\{\widetilde{T}_\bot\}$ is as follows.
\begin{itemize}
    \item We first construct a new tester $\{\overline{T}_i\}_i$ where
    \begin{equation}\label{eq-1242355}
    \overline{T}_i\coloneqq \E_{U\sim\mathbb{U}_{r}}[U^{\otimes n} T_iU^{\dag\otimes n}],
    \end{equation}
    where $U^{\otimes n}$ acts on $\bigotimes_{j=1}^n \mathcal{H}_{\mathrm{anc},j}$.
    
    \item Then, we define
    \begin{equation}\label{eq-1242217}
    \widetilde{T}_i\coloneqq \bigoplus_{\lambda\vdash_{s} n} \frac{1}{\dim(\mathcal{P}_\lambda)\dim(\mathcal{Q}^r_\lambda)} \cdot I_{\mathcal{P}_{\mathrm{AB},\lambda}}\otimes \tr_{\mathcal{Q}^r_{\mathrm{anc},\lambda}}\Big(\bbra{I_{\mathcal{P}_\lambda}}\overline{T}_i\kett{I_{\mathcal{P}_\lambda}}\Big),
    \end{equation}
    where $\kett{I_{\mathcal{P}_\lambda}}\in\mathcal{P}_{\mathrm{AB},\lambda}\otimes\mathcal{P}_{\mathrm{anc},\lambda}$ is the unnormalized maximally entangled state defined w.r.t. the Young's orthogonal basis (also called Young-Yamanouchi basis, on which $\pi\in\mathfrak{S}_n$ acts as a real matrix~\cite{ceccherini2010representation}).
    Note that $\widetilde{T}_i$ is a linear operator on $\bigotimes_{j=1}^n \mathcal{H}_{\mathrm{A},j}\otimes \mathcal{H}_{\mathrm{B},j}$.
\end{itemize}
To verify our construction, we first show in \cref{lemma-1242352} that $\{\overline{T}_i\}_i$ is a parallel tester that uses $n$ queries to an isometry channel $\mathcal{V}\in\isochannel_{d_1,rd_2}$ and outputs $i$ with probability 
\begin{equation*}
    \E_{\mathcal{W}\sim \dilation_r(\contract_r(\mathcal{V}))} [T_i\star C_\mathcal{W}^{\otimes n}],
\end{equation*}
and also provides an explicit expression of this probability. 
Then, using \cref{lemma-1242352}, we show in \cref{lemma-1242353} that there exists a positive semidefinite operator $\widetilde{T}_\bot$ such that $\{\widetilde{T}_i\}_i\cup\{\widetilde{T}_\bot\}$ is a parallel tester that uses $n$ queries to a quantum channel $\mathcal{E}\in\qchannel_{d_1,d_2}^r$ and outputs $i$ with probability $\E_{\mathcal{W}\sim \dilation_r(\mathcal{E})} [T_i\star C_\mathcal{W}^{\otimes n}]$, as desired. 
\end{proof}

First, we prove the following lemma about the properties of $\{\overline{T_i}\}_i$.
\begin{lemma}\label{lemma-1242352}
The tester $\{\overline{T}_i\}_i$ as defined in \cref{eq-1242355} has the following properties:
\begin{enumerate}
\item $\{\overline{T}_i\}_i$ is a parallel tester, and for any $\mathcal{V}\in\isochannel_{d_1,rd_2}$, it satisfies
\[\overline{T}_i\star C_{\mathcal{V}}^{\otimes n}=\E_{\mathcal{W}\sim\dilation_r(\contract_r(\mathcal{V}))}[T_i\star C_{\mathcal{W}}^{\otimes n}].\]

\item The probability can also be written as
\[\overline{T}_i\star C_{\mathcal{V}}^{\otimes n}=\sum_{\lambda\vdash_s n}\frac{1}{\dim(\mathcal{Q}^r_\lambda)}\tr\left(\tr_{\mathcal{Q}^r_{\mathrm{anc},\lambda}}\Big(\bbra{I_{\mathcal{P}_\lambda}}\overline{T}_i^{\mathrm{T}}\kett{I_{\mathcal{P}_\lambda}}\Big)\cdot \tr_{\mathcal{Q}^r_{\mathrm{anc},\lambda}}\Big(\ketbra{V_\lambda}{V_\lambda}\Big)\right),\]
where $s=\min\{r,d_1d_2\}$, and $\ket{V_\lambda}\in\mathcal{Q}^{d_1d_2}_{\mathrm{AB},\lambda}\otimes\mathcal{Q}^r_{\mathrm{anc},\lambda}$ is the vector appearing in the decomposition $\kett{V}^{\otimes n}=\bigoplus_{\lambda\vdash_{s}n}\kett{I_{\mathcal{P}_\lambda}}\otimes \ket{V_\lambda}$ due to \cref{lemma-3262205}.
\end{enumerate}
\end{lemma}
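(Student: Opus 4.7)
The plan is to handle the two parts of the lemma in turn, relying on Haar invariance on the ancilla for Part~(1) and on the Schur-Weyl decomposition from \cref{lemma-3262205} for Part~(2).

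For Part~(1), positivity $\overline{T}_i\sqsupseteq 0$ is immediate from \cref{eq-1242355} as a convex combination of conjugates of the positive operator $T_i$. Normalization follows from $\sum_i\overline{T}_i=\E_{U\sim\mathbb{U}_r}[U^{\otimes n}(\rho_\mathrm{A}\otimes I_\mathrm{B})U^{\dag\otimes n}]=\rho_\mathrm{A}\otimes I_\mathrm{B}$, since $U^{\otimes n}$ acts only on the ancilla factor of $I_\mathrm{B}$, which is invariant under unitary conjugation. For the probability identity, I would expand
\[\overline{T}_i\star C_\mathcal{V}^{\otimes n}=\E_{U\sim\mathbb{U}_r}\tr\!\bigl(T_i^{\mathrm{T}}\,U^{\mathrm{T}\otimes n}\,C_\mathcal{V}^{\otimes n}\,U^{*\otimes n}\bigr),\]
recognize that $U^{\mathrm{T}\otimes n}C_\mathcal{V}^{\otimes n}U^{*\otimes n}=C_{(\mathcal{U}^{\mathrm{T}}\otimes\mathcal{I}_{d_2})\circ\mathcal{V}}^{\otimes n}$ (applying a unitary on the ancilla output composes $\mathcal{V}$ with that unitary), and then absorb the transpose via the substitution $U'=U^{\mathrm{T}}$, which preserves the Haar measure. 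The classification of dilations recalled around \cref{eq-1230155} then identifies the resulting channel distribution with the Haar distribution on $\dilation_r(\contract_r(\mathcal{V}))$, giving the claimed identity.

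For Part~(2), I would rewrite $\overline{T}_i\star C_\mathcal{V}^{\otimes n}=\bbra{V}^{\otimes n}\overline{T}_i^{\mathrm{T}}\kett{V}^{\otimes n}$ (using $C_\mathcal{V}^{\otimes n}=\kettbbra{V}{V}^{\otimes n}$) and substitute the decomposition $\kett{V}^{\otimes n}=\bigoplus_{\lambda\vdash_s n}\kett{I_{\mathcal{P}_\lambda}}\otimes\ket{V_\lambda}$ given by \cref{lemma-3262205}. The Haar twirl on the ancilla forces $\overline{T}_i$ to be block-diagonal in the $\mathfrak{S}_n$-isotypic decomposition of $\bigotimes_j\mathcal{H}_{\mathrm{anc},j}$, and within each block to act as $I/\dim(\mathcal{Q}^r_{\mathrm{anc},\mu})$ on the $\mathcal{Q}^r_{\mathrm{anc},\mu}$ factor by Schur's lemma. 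These two structural facts kill every off-diagonal term $\lambda\neq\mu$ in the double sum arising from the two copies of the decomposition, collapsing the answer to a single sum over $\lambda$. Within the $\lambda$-th summand, the inner product factors into a multiplicity-space contraction $\bbra{I_{\mathcal{P}_\lambda}}(\cdot)\kett{I_{\mathcal{P}_\lambda}}$ and an irrep-space contraction of the form $\bra{V_\lambda}(X\otimes I_{\mathcal{Q}^r_{\mathrm{anc},\lambda}})\ket{V_\lambda}=\tr(X\cdot\tr_{\mathcal{Q}^r_{\mathrm{anc},\lambda}}\ketbra{V_\lambda}{V_\lambda})$; commuting the partial trace with the $\kett{I_{\mathcal{P}_\lambda}}$-contraction and absorbing the prefactor $1/\dim(\mathcal{Q}^r_{\mathrm{anc},\lambda})$ from the twirl lets me identify this $X$ with $\tr_{\mathcal{Q}^r_{\mathrm{anc},\lambda}}(\bbra{I_{\mathcal{P}_\lambda}}\overline{T}_i^{\mathrm{T}}\kett{I_{\mathcal{P}_\lambda}})$, matching the stated expression.

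The main obstacle I expect is the bookkeeping for two independent Schur-Weyl decompositions, one on the AB systems and one on the ancilla, and the wiring between their multiplicity spaces mediated by $\kett{I_{\mathcal{P}_\lambda}}$. Pinning down that only the diagonal $\lambda=\mu$ terms survive requires using both the ancilla-irrep block-diagonality of the twirled tester and the fact that $\kett{I_{\mathcal{P}_\lambda}}$ already pairs the same partition $\lambda$ across the AB and ancilla multiplicity spaces, so some care is needed to compose the two projections consistently.
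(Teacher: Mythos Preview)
Your proposal is correct and follows essentially the same strategy as the paper. Part~(1) matches the paper's argument exactly (the paper is simply less explicit about absorbing the transpose into the Haar integral).

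For Part~(2) there is a small but instructive difference in organization. You apply Schur's lemma to the \emph{tester}: since $\overline{T}_i$ (and hence $\overline{T}_i^{\mathrm T}$) commutes with $U^{\otimes n}$ on the ancilla, it is block-diagonal in the ancilla Schur--Weyl decomposition with each block of the form $(\,\cdot\,)\otimes I_{\mathcal{Q}^r_{\mathrm{anc},\mu}}$, and you then contract against the decomposition of $\kett{V}^{\otimes n}$. The paper instead leaves $\overline{T}_i^{\mathrm T}$ untouched and moves the twirl to the \emph{state}: it inserts $\E_{U}[U^{\otimes n}\kettbbra{V}{V}^{\otimes n}U^{\dag\otimes n}]$ (legitimate because $\overline{T}_i^{\mathrm T}$ commutes with $U^{\otimes n}$), and then Schur's lemma directly yields $\bigoplus_\lambda \kettbbra{I_{\mathcal P_\lambda}}{I_{\mathcal P_\lambda}}\otimes \tr_{\mathcal Q^r_{\mathrm{anc},\lambda}}(\ketbra{V_\lambda}{V_\lambda})\otimes \tfrac{1}{\dim \mathcal Q^r_\lambda}I_{\mathcal Q^r_{\mathrm{anc},\lambda}}$, from which the claimed formula follows by a single trace. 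Both routes are equivalent; the paper's version sidesteps the ``wiring'' bookkeeping you flagged as the main obstacle, since the twirl of the state is already block-diagonal in $\lambda$ and carries the $1/\dim(\mathcal{Q}^r_\lambda)$ factor explicitly. One small wording fix on your side: Schur's lemma gives that $\overline{T}_i$ acts as a scalar multiple of $I$ on each $\mathcal{Q}^r_{\mathrm{anc},\mu}$ factor, not literally as $I/\dim(\mathcal{Q}^r_{\mathrm{anc},\mu})$; the $1/\dim$ normalization appears only once you rewrite that scalar as $\tfrac{1}{\dim \mathcal{Q}^r_\mu}\tr_{\mathcal{Q}^r_\mu}(\cdot)$.
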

\begin{proof}
\textbf{Item 1}. Note that 
\begin{equation}\label{eq-1260054}
\sum_{i}\overline{T}_i=\E_{U\sim\mathbb{U}_r}\left[U^{\otimes n}\sum_i T_iU^{\dag\otimes n}\right]=\E_{U\sim\mathbb{U}_r}\left[U^{\otimes n}(\rho_\mathrm{A}\otimes I_{\mathrm{B},\mathrm{anc}}) U^{\dag\otimes n}\right]=\rho_\mathrm{A}\otimes I_{\mathrm{B},\mathrm{anc}},
\end{equation}
where $\rho_\mathrm{A}$ is a density operator on $\bigotimes_{j=1}^n\mathcal{H}_{\mathrm{A},j}$, $I_{\mathrm{B},\mathrm{anc}}$ is the identity operator on $\bigotimes_{j=1}^n\mathcal{H}_{\mathrm{B},j}\otimes\mathcal{H}_{\mathrm{anc},j}$, and we use the fact that $\{T_i\}_i$ is a parallel tester and $U^{\otimes n}$ acts only on $\bigotimes_{j=1}^n \mathcal{H}_{\mathrm{anc},j}$.
Therefore, $\{\overline{T}_i\}_i$ is a parallel tester.
On the other hand, note that
\begin{align}
\overline{T}_i\star C_{\mathcal{V}}^{\otimes n}&=\tr\!\left(\overline{T}_i^{\mathrm{T}} C_{\mathcal{V}}^{\otimes n}\right) \nonumber \\
&= \E_{U\sim \mathbb{U}_r} \left[\tr\!\left(T_i^{\mathrm{T}}U^{\otimes n} C_\mathcal{V}^{\otimes n} U^{\dag\otimes n}\right)\right]\label{eq:use-def-Tbar} \\
&=\E_{U\sim \mathbb{U}_r} \left[\tr\!\left(T_i^{\mathrm{T}}  C_{\mathcal{U}\circ \mathcal{V}}^{\otimes n}\right)\right]\nonumber\\
&=\E_{\mathcal{W}\sim\dilation_r(\contract_r(\mathcal{V}))}\left[\tr\!\left(T_i^\mathrm{T} C_{\mathcal{W}}^{\otimes n}\right)\right]\label{eq-1252216}\\
&=\E_{\mathcal{W}\sim\dilation_r(\contract_r(\mathcal{V}))}\left[T_i \star C_{\mathcal{W}}^{\otimes n}\right] \nonumber
\end{align}
where \Cref{eq:use-def-Tbar} uses the definition of $\overline{T}_i$ in \Cref{eq-1242355},
and \cref{eq-1252216} is due to the unitary freedom of Stinespring dilation and the definition of the Haar distribution on $\dilation_r(\cdot)$. 

\textbf{Item 2}. We consider $\kett{V}$ as a bipartite state in $(\mathbb{C}^{d_1}\otimes \mathbb{C}^{d_2})\otimes \mathbb{C}^{r}$. Then, by \cref{lemma-3262205}, we can write $\kett{V}^{\otimes n}=\bigoplus_{\lambda\vdash_s n}\kett{I_{\mathcal{P}_\lambda}}\otimes\ket{V_\lambda}$, where $\kett{I_{\mathcal{P}_\lambda}}\in\mathcal{P}_{\mathrm{AB},\lambda}\otimes\mathcal{P}_{\mathrm{anc},\lambda}$ is an unnormalized maximally entangled state and $\ket{V_\lambda}\in \mathcal{Q}^{d_1d_2}_{\mathrm{AB},\lambda}\otimes\mathcal{Q}^r_{\mathrm{anc},\lambda}$. Also note that $U^{\otimes n}\overline{T}_i = \overline{T}_i U^{\otimes n}$ for any $U\in\mathbb{U}_r$ where $U^{\otimes n}$ acts on $\bigotimes_{j=1}^n \mathcal{H}_{\mathrm{anc},j}$.
Therefore,
\begin{align}
\overline{T}_i\star C_{\mathcal{V}}^{\otimes n}&=\tr\!\left(\overline{T}_i^\mathrm{T} \kettbbra{V}{V}^{\otimes n}\right)\nonumber \\
&=\tr\!\left(\overline{T}_i^\mathrm{T} \E_{U\sim\mathbb{U}_r}[U^{\otimes n}\kettbbra{V}{V}^{\otimes n} U^{\dag\otimes n}]\right) \nonumber  \\
&=\tr\!\left(\overline{T}_i^\mathrm{T} \E_{U\sim\mathbb{U}_r}\left[\bigoplus_{\lambda,\mu\vdash_s n} \kett{I_{\mathcal{P}_\lambda}}\bbra{I_{\mathcal{P}_\mu}}\otimes \texttt{q}_\lambda(U)\ketbra{V_\lambda}{V_\mu}\texttt{q}_\mu(U)^\dag \right]\right)\label{eq-1260033}\\
&=\tr\!\left(\overline{T}_i^{\mathrm{T}} \cdot \left(\bigoplus_{\lambda\vdash_s n}\kett{I_{\mathcal{P}_\lambda}}\bbra{I_{\mathcal{P}_\lambda}}\otimes \tr_{\mathcal{Q}^r_{\mathrm{anc},\lambda}}\Big(\ketbra{V_\lambda}{V_\lambda}\Big)\otimes \frac{1}{\dim(\mathcal{Q}^r_\lambda)}I_{\mathcal{Q}^r_{\mathrm{anc},\lambda}}\right)\right)\label{eq-1260034}\\
&=\sum_{\lambda\vdash_s n} \frac{1}{\dim(\mathcal{Q}^r_\lambda)}\tr\left(\tr_{\mathcal{Q}^r_{\mathrm{anc},\lambda}}\Big(\bbra{I_{\mathcal{P}_\lambda}}\overline{T}_i^{\mathrm{T}} \kett{I_{\mathcal{P}_\lambda}}\Big)\cdot \tr_{\mathcal{Q}^r_{\mathrm{anc},\lambda}}\Big(\ketbra{V_\lambda}{V_\lambda}\Big)\right),\nonumber
\end{align}
where in \cref{eq-1260033} $\texttt{q}_\lambda(U)$ acts on $\mathcal{Q}^r_{\mathrm{anc},\lambda}$, and \cref{eq-1260034} is by using Schur's lemma~\cite{fulton2013representation}.
\end{proof}

Then, we prove the following lemma about the properties of $\{\widetilde{T}_i\}_i$.

\begin{lemma}\label{lemma-1242353}
The operators $\{\widetilde{T}_i\}_i$ as defined in \cref{eq-1242217} have the following properties:
\begin{enumerate}
\item There exists a positive semidefinite operator $\widetilde{T}_\bot$ such that $\{\widetilde{T}_i\}_i\cup\{\widetilde{T}_\bot\}$ is a parallel tester. 
\item For any $\mathcal{E}\in\qchannel_{d_1,d_2}^r$, we have
\[\widetilde{T}_i\star C_{\mathcal{E}}^{\otimes n}=\E_{\mathcal{W}\sim\dilation_r(\mathcal{E})}[T_i\star C_{\mathcal{W}}^{\otimes n}].\]
\end{enumerate}
\end{lemma}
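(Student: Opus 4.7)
The plan is to prove item 2 first by a direct block computation in the Schur-Weyl bases on both the AB and anc systems, and then obtain item 1 by specializing the same computation to $\sum_i$.

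For item 2, I will first derive a Schur-Weyl decomposition of the Choi operator. Writing $C_{\mathcal{E}}^{\otimes n}=\tr_{\mathrm{anc}}(\kettbbra{V}{V}^{\otimes n})$ for any $\mathcal{V}\in\dilation_r(\mathcal{E})$ and substituting $\kett{V}^{\otimes n}=\bigoplus_\lambda \kett{I_{\mathcal{P}_\lambda}}\otimes \ket{V_\lambda}$ from \cref{lemma-3262205}, the cross terms $\kett{I_{\mathcal{P}_\lambda}}\bbra{I_{\mathcal{P}_\mu}}$ vanish after partial trace over $\mathcal{P}_{\mathrm{anc}}$ for $\lambda\neq \mu$ while the diagonal terms give $I_{\mathcal{P}_{\mathrm{AB},\lambda}}$, yielding
\[C_{\mathcal{E}}^{\otimes n}=\bigoplus_{\lambda\vdash_s n} I_{\mathcal{P}_{\mathrm{AB},\lambda}}\otimes \tr_{\mathcal{Q}^r_{\mathrm{anc},\lambda}}(\ketbra{V_\lambda}{V_\lambda}).\]
Pairing this with $\widetilde{T}_i^{\mathrm{T}}$ (which is also block-diagonal with an $I_{\mathcal{P}_{\mathrm{AB},\lambda}}$ factor on each block), the traces over $\mathcal{P}_{\mathrm{AB},\lambda}$ contribute $\dim(\mathcal{P}_\lambda)$ and cancel the normalization, leaving
\[\widetilde{T}_i\star C_{\mathcal{E}}^{\otimes n}=\sum_{\lambda\vdash_s n}\frac{1}{\dim(\mathcal{Q}^r_\lambda)}\tr\!\left(\tr_{\mathcal{Q}^r_{\mathrm{anc},\lambda}}\!\bigl(\bbra{I_{\mathcal{P}_\lambda}}\overline{T}_i\kett{I_{\mathcal{P}_\lambda}}\bigr)^{\mathrm{T}}\cdot \tr_{\mathcal{Q}^r_{\mathrm{anc},\lambda}}\!\bigl(\ketbra{V_\lambda}{V_\lambda}\bigr)\right).\]
To identify this with the expression from item 2 of \cref{lemma-1242352}, I will use the reality of the Young-Yamanouchi basis in the standard tensor-product basis: a short index computation gives the swap identity $(\bbra{I_{\mathcal{P}_\lambda}}X\kett{I_{\mathcal{P}_\lambda}})^{\mathrm{T}}=\bbra{I_{\mathcal{P}_\lambda}}X^{\mathrm{T}}\kett{I_{\mathcal{P}_\lambda}}$, and the transpose commutes with the remaining partial trace. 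Combining this with item 1 of \cref{lemma-1242352} and the identity $\dilation_r(\contract_r(\mathcal{V}))=\dilation_r(\mathcal{E})$ for $\mathcal{V}\in\dilation_r(\mathcal{E})$ completes item 2.

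For item 1, PSD-ness of each $\widetilde{T}_i$ is immediate: $\overline{T}_i\succeq 0$ implies $\bbra{I_{\mathcal{P}_\lambda}}\overline{T}_i\kett{I_{\mathcal{P}_\lambda}}\succeq 0$, and both partial trace and tensoring with $I_{\mathcal{P}_{\mathrm{AB},\lambda}}$ preserve positivity. For the sum condition, I plan to substitute $\sum_i \overline{T}_i=\rho_\mathrm{A}\otimes I_{\mathrm{B},\mathrm{anc}}$ from \cref{eq-1260054} into the definition of $\widetilde{T}_i$ and rerun the block computation. The $I_{\mathrm{anc}}$ factor combined with $\bbra{I_{\mathcal{P}_\lambda}}\cdot\kett{I_{\mathcal{P}_\lambda}}$ and $\tr_{\mathcal{Q}^r_{\mathrm{anc},\lambda}}$ collapses to a scalar multiple of $I_{\mathcal{Q}^r_{\mathrm{anc},\lambda}}$, and what survives on AB is exactly the $\mathfrak{S}_n$-twirling $\E_{\pi\in\mathfrak{S}_n}[\texttt{p}_\mathrm{AB}(\pi)(\rho_\mathrm{A}\otimes I_\mathrm{B})\texttt{p}_\mathrm{AB}(\pi)^{-1}]$ of $\rho_\mathrm{A}\otimes I_\mathrm{B}$. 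Since $\texttt{p}_\mathrm{AB}(\pi)=\texttt{p}_\mathrm{A}(\pi)\otimes \texttt{p}_\mathrm{B}(\pi)$ and $I_\mathrm{B}$ is $\mathfrak{S}_n$-invariant, this factors as $\sigma_\mathrm{A}\otimes I_\mathrm{B}$ with $\sigma_\mathrm{A}=\E_\pi[\texttt{p}_\mathrm{A}(\pi)\rho_\mathrm{A}\texttt{p}_\mathrm{A}(\pi)^{-1}]$ a density operator. In particular, $\widetilde{T}_\bot=0$ already yields a valid parallel tester.

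The main obstacle I anticipate is the transpose bookkeeping in item 2: the link product uses the standard-basis transpose, whereas the block structure of $\widetilde{T}_i$ lives most cleanly in the Young-Yamanouchi basis. The reality of the YY basis is the crucial ingredient that lets the transpose commute with the $\bbra{I_{\mathcal{P}_\lambda}}\cdot\kett{I_{\mathcal{P}_\lambda}}$ sandwich, reconciling the two conventions and allowing a direct match with the formula from \cref{lemma-1242352}.
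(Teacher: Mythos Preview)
Your approach to item 2 is correct and matches the paper's proof: decompose $C_{\mathcal{E}}^{\otimes n}=\tr_{\mathrm{anc}}(\kettbbra{V}{V}^{\otimes n})$ via \cref{lemma-3262205}, pair with the block-diagonal $\widetilde{T}_i$, and identify the result with item~2 of \cref{lemma-1242352}. Your remark about the transpose and the reality of the Young--Yamanouchi basis is a point the paper uses implicitly.

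Item 1, however, has a genuine gap. Your claim that ``what survives on AB is exactly the $\mathfrak{S}_n$-twirling'' and hence $\widetilde{T}_\bot=0$ is incorrect. The direct sum in \cref{eq-1242217} runs only over $\lambda\vdash_s n$ with $s=\min\{r,d_1d_2\}$, so your block computation yields
\[
\sum_i \widetilde{T}_i=\bigoplus_{\lambda\vdash_s n}\frac{1}{\dim(\mathcal{P}_\lambda)}\, I_{\mathcal{P}_{\mathrm{AB},\lambda}}\otimes\tr_{\mathcal{P}_{\mathrm{AB},\lambda}}(M_{\lambda\to\lambda}),
\]
whereas the full $\mathfrak{S}_n$-twirl $\rho'_\mathrm{A}\otimes I_\mathrm{B}$ is the same expression summed over all $\lambda\vdash_{d_1d_2} n$. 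When $r<d_1d_2$ the blocks with $s<\ell(\lambda)\leq d_1d_2$ are missing from $\sum_i\widetilde{T}_i$ but can be nonzero in the twirl (e.g.\ take $\rho_\mathrm{A}$ maximally mixed, so $\rho'_\mathrm{A}\otimes I_\mathrm{B}$ is full-rank). In particular $\tr(\sum_i\widetilde{T}_i)<d_2^n$, so $\sum_i\widetilde{T}_i$ cannot equal $\sigma_\mathrm{A}\otimes I_\mathrm{B}$ for any density operator $\sigma_\mathrm{A}$, and $\widetilde{T}_\bot=0$ fails.

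The fix is exactly what the paper does: since $M_{\lambda\to\lambda}\sqsupseteq 0$, the truncated sum satisfies $\sum_i\widetilde{T}_i\sqsubseteq \rho'_\mathrm{A}\otimes I_\mathrm{B}$, and one then takes $\widetilde{T}_\bot=\rho'_\mathrm{A}\otimes I_\mathrm{B}-\sum_i\widetilde{T}_i\sqsupseteq 0$. Your computation already delivers all the ingredients; you only need to replace the claimed equality by this L\"owner inequality.
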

\begin{proof}
\textbf{Item 1}.
Note that by the definition in \Cref{eq-1242217},
\begin{align}
\sum_i \widetilde{T}_i&=\bigoplus_{\lambda\vdash_s n}\frac{1}{\dim(\mathcal{P}_\lambda)\dim(\mathcal{Q}^r_\lambda)}\cdot I_{\mathcal{P}_{\mathrm{AB},\lambda}}\otimes \tr_{\mathcal{Q}^r_{\mathrm{anc},\lambda}}\left(\bbra{I_{\mathcal{P}_\lambda}}\sum_i \overline{T}_i\kett{I_{\mathcal{P}_\lambda}}\right) \nonumber\\
&=\bigoplus_{\lambda\vdash_s n}\frac{1}{\dim(\mathcal{P}_\lambda)\dim(\mathcal{Q}^r_\lambda)}\cdot I_{\mathcal{P}_{\mathrm{AB},\lambda}}\otimes \tr_{\mathcal{Q}^r_{\mathrm{anc},\lambda}}\Big(\bbra{I_{\mathcal{P}_\lambda}} \rho_\mathrm{A}\otimes I_{\mathrm{B},\mathrm{anc}} \kett{I_{\mathcal{P}_\lambda}}\Big),\label{eq-1260117}
\end{align}
where in \cref{eq-1260117} $\rho_\mathrm{A}$ is a density operator on $\bigotimes_{j=1}^n\mathcal{H}_{\mathrm{A},j}$, $I_{\mathrm{B},\mathrm{anc}}$ is the identity operator on $\bigotimes_{j=1}^n\mathcal{H}_{\mathrm{B},j}\otimes\mathcal{H}_{\mathrm{anc},j}$ and we note that $U^{\otimes n}$ acts on $\bigotimes_{j=1}^n \mathcal{H}_{\mathrm{anc},j}$, as shown in \cref{eq-1260054}.
Now, we write $\rho_\mathrm{A}\otimes I_{\mathrm{B},\mathrm{anc}}=(\rho_\mathrm{A}\otimes I_{\mathrm{B}})\otimes I_{\mathrm{anc}}$ and decompose $\rho_\mathrm{A}\otimes I_{\mathrm{B}}$ in the Schur-Weyl basis:
\[\rho_\mathrm{A}\otimes I_{\mathrm{B}}=\bigoplus_{\substack{\lambda,\mu\vdash_{d_1d_2}n}} M_{\lambda\rightarrow \mu},\]
where $M_{\lambda\rightarrow\mu}$ is a linear operator from $\mathcal{P}_{\mathrm{AB},\lambda}\otimes\mathcal{Q}^{d_1d_2}_{\mathrm{AB},\lambda}$ to $\mathcal{P}_{\mathrm{AB},\mu}\otimes\mathcal{Q}^{d_1d_2}_{\mathrm{AB},\mu}$, and since $\rho_\mathrm{A}\otimes I_{\mathrm{B}}$ is positive semidefinite, $M_{\lambda\rightarrow\lambda}$ is also positive semidefinite. 
Furthermore, we can write
\[(\rho_\mathrm{A}\otimes I_{\mathrm{B}})\otimes I_{\mathrm{anc}}=\bigoplus_{\substack{\lambda,\mu\vdash_{d_1d_2}n \\ \nu\vdash_r n}}M_{\lambda\rightarrow\mu}\otimes I_{\mathcal{P}_{\mathrm{anc},\nu}}\otimes I_{\mathcal{Q}^r_{\mathrm{anc},\nu}}.\]
Then, \cref{eq-1260117} equals
\begin{align}
&\bigoplus_{\lambda\vdash_s n}\frac{1}{\dim(\mathcal{P}_\lambda)\dim(\mathcal{Q}^r_\lambda)}\cdot I_{\mathcal{P}_{\mathrm{AB},\lambda}}\otimes \tr_{\mathcal{Q}^r_{\mathrm{anc},\lambda}}\left(\bbra{I_{\mathcal{P}_\lambda}}\bigoplus_{\substack{\kappa,\mu\vdash_{d_1d_2}n \\ \nu\vdash_r n}}M_{\kappa\rightarrow\mu}\otimes I_{\mathcal{P}_{\mathrm{anc},\nu}}\otimes I_{\mathcal{Q}^r_{\mathrm{anc},\nu}}\kett{I_{\mathcal{P}_\lambda}}\right) \nonumber \\
=&\bigoplus_{\lambda\vdash_s n}\frac{1}{\dim(\mathcal{P}_\lambda)\dim(\mathcal{Q}^r_\lambda)}\cdot I_{\mathcal{P}_{\mathrm{AB},\lambda}}\otimes \tr_{\mathcal{Q}^r_{\mathrm{anc},\lambda}}\Big(\tr_{\mathcal{P}_{\mathrm{AB},\lambda}}(M_{\lambda\rightarrow\lambda})\otimes I_{\mathcal{Q}^r_{\mathrm{anc},\lambda}}\Big) \label{eq-1261541}\\
=& \bigoplus_{\lambda\vdash_s n}\frac{1}{\dim(\mathcal{P}_\lambda)}\cdot I_{\mathcal{P}_{\mathrm{AB},\lambda}}\otimes \tr_{\mathcal{P}_{\mathrm{AB},\lambda}}(M_{\lambda\rightarrow\lambda})\nonumber \\
\sqsubseteq & \bigoplus_{\lambda\vdash_{d_1d_2} n}\frac{1}{\dim(\mathcal{P}_\lambda)}\cdot I_{\mathcal{P}_{\mathrm{AB},\lambda}}\otimes \tr_{\mathcal{P}_{\mathrm{AB},\lambda}}(M_{\lambda\rightarrow\lambda}),\label{eq-1261542}
\end{align}
where \cref{eq-1261541} is because $\kett{I_{\mathcal{P}_\lambda}}$ is an unnormalized maximally entangled state on $\mathcal{P}_{\mathrm{AB},\lambda}\otimes\mathcal{P}_{\mathrm{anc},\lambda}$, \cref{eq-1261542} is because $M_{\lambda\rightarrow\lambda}$ is positive semidefinite and $s\leq d_1d_2$.
Then, note that
\begin{align}
\frac{1}{n!}\sum_{\pi\in\mathfrak{S}_n}\texttt{p}_{\mathrm{A}}(\pi)\,\rho_\mathrm{A}\, \texttt{p}_{\mathrm{A}}(\pi)^\dag \otimes I_{\mathrm{B}}&=\frac{1}{n!}\sum_{\pi\in \mathfrak{S}_n}\texttt{p}_{\mathrm{AB}}(\pi)(\rho_\mathrm{A}\otimes I_{\mathrm{B}})\texttt{p}_{\mathrm{AB}}(\pi)^\dag \nonumber \\
&=\bigoplus_{\lambda\vdash_{d_1d_2} n}\frac{1}{\dim(\mathcal{P}_\lambda)} \cdot I_{\mathcal{P}_{\mathrm{AB},\lambda}}\otimes \tr_{\mathcal{P}_{\mathrm{AB},\lambda}}(M_{\lambda\rightarrow\lambda}), \label{eq-1261722}
\end{align}
where $\texttt{p}_\mathrm{A}(\pi)$ and $\texttt{p}_{\mathrm{AB}}(\pi)$ are the permutation actions of $\pi$ on $\bigotimes_{j=1}^n \mathcal{H}_{\mathrm{A},j}$ and $\bigotimes_{j=1}^n \mathcal{H}_{\mathrm{A},j}\otimes\mathcal{H}_{\mathrm{B},j}$, respectively, and \cref{eq-1261722} is due to Schur's lemma.
Note that \cref{eq-1261722} is exactly equal to \cref{eq-1261542}. Therefore, we have
\[\sum_{i} \widetilde{T}_i\sqsubseteq \rho'_\mathrm{A} \otimes I_{\mathrm{B}},\]
where $\rho'_\mathrm{A}=\frac{1}{n!}\sum_{\pi\in\mathfrak{S}_n} \texttt{p}_\mathrm{A}(\pi) \,\rho_\mathrm{A}\, \texttt{p}_\mathrm{A}(\pi)^\dag$ is a quantum state. This means that we can find a positive semidefinite operator $\widetilde{T}_\bot$ such that $\sum_i\widetilde{T}_i+\widetilde{T}_\bot=\rho'_\mathrm{A}\otimes I_{\mathrm{B}}$, and thus $\{\widetilde{T}_i\}_i\cup\{\widetilde{T}_\bot\}$ is a parallel tester.

\textbf{Item 2}. Let $\mathcal{E}\in\qchannel_{d_1,d_2}^r$. We choose an arbitrary dilation $\mathcal{V}\in\dilation_r(\mathcal{E})$, where $V: \mathcal{H}_{\mathrm{A}} \rightarrow \mathcal{H}_{\mathrm{B}}\otimes\mathcal{H}_\mathrm{anc}$ and $\mathcal{H}_\mathrm{A}\cong \mathbb{C}^{d_1}$, $\mathcal{H}_{\mathrm{B}}\cong\mathbb{C}^{d_2}$, $\mathcal{H}_{\mathrm{anc}}\cong \mathbb{C}^{r}$. 
Note that
\[\tr_{\mathcal{H}_\mathrm{anc}}(C_{\mathcal{V}})=\tr_{\mathcal{H}_\mathrm{anc}}(\kettbbra{V}{V})=C_\mathcal{E}.\]
Considering $\kett{V}$ as a bipartite state in $(\mathcal{H}_\mathrm{A}\otimes\mathcal{H}_\mathrm{B})\otimes\mathcal{H}_{\mathrm{anc}}$, we can write $\kett{V}^{\otimes n}=\bigoplus_{\lambda\vdash_s n}\kett{I_{\mathcal{P}_\lambda}}\otimes \ket{V_\lambda}$ for $\kett{I_{\mathcal{P}_\lambda}}\in\mathcal{P}_{\mathrm{AB},\lambda}\otimes\mathcal{P}_{\mathrm{anc},\lambda}$ and $\ket{V_\lambda}\in\mathcal{Q}^{d_1d_2}_{\mathrm{AB},\lambda}\otimes\mathcal{Q}^r_{\mathrm{anc},\lambda}$ due to \cref{lemma-3262205}.
Therefore, we have
\begin{align}
\tr_{\mathrm{anc}}(\kettbbra{V}{V}^{\otimes n})&=\tr_{\mathrm{anc}}\left(\bigoplus_{\substack{\lambda,\mu\vdash_{s}n}} \kettbbra{I_{\mathcal{P}_\lambda}}{I_{\mathcal{P}_\mu}}\otimes \ketbra{V_\lambda}{V_\mu}\right) \nonumber\\
&=\bigoplus_{\lambda\vdash_s n} \tr_{\mathcal{P}_{\mathrm{anc},\lambda}}\Big(\kettbbra{I_{\mathcal{P}_\lambda}}{I_{\mathcal{P}_\lambda}}\Big)\otimes \tr_{\mathcal{Q}^r_{\mathrm{anc},\lambda}}\Big(\ketbra{V_\lambda}{V_\lambda}\Big)\nonumber \\
&=\bigoplus_{\lambda\vdash_s n}I_{\mathcal{P}_{\mathrm{AB},\lambda}}\otimes \tr_{\mathcal{Q}^r_{\mathrm{anc},\lambda}}\Big(\ketbra{V_\lambda}{V_\lambda}\Big),\label{eq-12151045}
\end{align}
where $\tr_{\mathrm{anc}}(\cdot)$ denotes the partial trace on $\bigotimes_{j=1}^n \mathcal{H}_{\mathrm{anc},j}$. We also have
\begin{align}
C_\mathcal{E}^{\otimes n}=\bigoplus_{\lambda\vdash_{d_1d_2} n} I_{\mathcal{P}_{\mathrm{AB},\lambda}}\otimes C_{\mathcal{E},\lambda},\label{eq-12151044}
\end{align}
for some $\mathcal{C}_{\mathcal{E},\lambda}\in\mathcal{L}(\mathcal{Q}^{d_1d_2}_{\mathrm{AB},\lambda})$. By comparing \cref{eq-12151045} with \cref{eq-12151044}, we know that $\tr_{\mathcal{Q}^r_{\mathrm{anc},\lambda}}(\ketbra{V_\lambda}{V_\lambda})=C_{\mathcal{E},\lambda}$ for $\lambda\vdash_s n$, and $C_{\mathcal{E},\lambda}=0$ for those $\lambda$ that have more than $s$ rows.
Therefore,
\begin{align}
\widetilde{T}_i\star C_{\mathcal{E}}^{\otimes n}&=\tr\left(\left(\bigoplus_{\lambda\vdash_{s} n} \frac{1}{\dim(\mathcal{P}_\lambda)\dim(\mathcal{Q}^r_\lambda)} \cdot I_{\mathcal{P}_{\mathrm{AB},\lambda}}\otimes \tr_{\mathcal{Q}^r_{\mathrm{anc},\lambda}}\Big(\bbra{I_{\mathcal{P}_\lambda}}\overline{T}_i\kett{I_{\mathcal{P}_\lambda}}\Big)\right)^{\mathrm{T}} \cdot C^{\otimes n}_\mathcal{E}\right)\nonumber \\
&=\sum_{\lambda\vdash_s n}\frac{1}{\dim(\mathcal{Q}^r_\lambda)}\tr\left(\tr_{\mathcal{Q}^r_{\mathrm{anc},\lambda}}\Big(\bbra{I_{\mathcal{P}_\lambda}}\overline{T}_i\kett{I_{\mathcal{P}_\lambda}}\Big)^{\mathrm{T}}\cdot C_{\mathcal{E},\lambda}\right) \nonumber \\
&=\sum_{\lambda\vdash_s n}\frac{1}{\dim(\mathcal{Q}^r_\lambda)}\tr\left(\tr_{\mathcal{Q}^r_{\mathrm{anc},\lambda}}\Big(\bbra{I_{\mathcal{P}_\lambda}}\overline{T}_i^{\mathrm{T}}\kett{I_{\mathcal{P}_\lambda}}\Big)\cdot \tr_{\mathcal{Q}^r_{\mathrm{anc},\lambda}}\Big(\ketbra{V_\lambda}{V_\lambda}\Big)\right) \nonumber\\
&=\overline{T}_i\star C_{\mathcal{V}}^{\otimes n} \label{eq-1270039}\\
&=\E_{\mathcal{W}\sim \dilation_r(\mathcal{E})}[T_i\star C_{\mathcal{W}}^{\otimes n}], \label{eq-1270040}
\end{align}
where \cref{eq-1270039} is by item 2 of \cref{lemma-1242352} and \cref{eq-1270040} is by item 1 of \cref{lemma-1242352}.
\end{proof}

Then, we introduce the following result about bipartite pure states, which is widely used in quantum information theory (see, e.g., \cite{matsumoto2007universal}).
\begin{lemma}\label{lemma-3262205}
Let \(\ket{\psi}\in \mathcal{H}_{\mathrm{A}}\otimes\mathcal{H}_{\mathrm{B}}\cong \mathbb{C}^{d_1}\otimes\mathbb{C}^{d_2}\) be a vector and let $s=\min\{d_1, d_2\}$, then $\ket{\psi}^{\otimes n}$ can be written as
\begin{equation*}
\ket{\psi}^{\otimes n}= \bigoplus_{\lambda\vdash_{s} n}\kett{I_{\mathcal{P}_\lambda}}\otimes\ket{\psi_\lambda},
\end{equation*}
where \(\kett{I_{\mathcal{P}_\lambda}}\) is the unnormalized maximally entangled state on \(\mathcal{P}_{\mathrm{A},\lambda}\otimes \mathcal{P}_{\mathrm{B},\lambda}\) defined w.r.t. the Young's orthogonal basis, and \(\ket{\psi_\lambda}\in\mathcal{Q}^{d_1}_{\mathrm{A},\lambda}\otimes \mathcal{Q}^{d_2}_{\mathrm{B},\lambda}\).
\end{lemma}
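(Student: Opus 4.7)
The plan is to exploit the obvious $\mathfrak{S}_n$-equivariance of $\ket{\psi}^{\otimes n}$: namely, it is fixed by the diagonal action $\texttt{p}_\mathrm{A}(\pi)\otimes\texttt{p}_\mathrm{B}(\pi)$ for every $\pi\in\mathfrak{S}_n$, since permuting the $n$ copies on both sides simultaneously merely relabels identical tensor factors. First, I would apply Schur-Weyl duality to both $n$-fold systems separately and decompose the ambient space as
\[\bigotimes_{i=1}^n \mathcal{H}_{\mathrm{A}}\otimes\mathcal{H}_{\mathrm{B}} \,\cong\, \bigoplus_{\substack{\lambda\vdash_{d_1} n\\ \mu\vdash_{d_2}n}}\mathcal{P}_{\mathrm{A},\lambda}\otimes\mathcal{Q}^{d_1}_{\mathrm{A},\lambda}\otimes\mathcal{P}_{\mathrm{B},\mu}\otimes \mathcal{Q}^{d_2}_{\mathrm{B},\mu}.\]
Since the diagonal $\mathfrak{S}_n$ acts trivially on the $\mathcal{Q}$ factors and as $\texttt{p}_\lambda(\pi)\otimes\texttt{p}_\mu(\pi)$ on $\mathcal{P}_{\mathrm{A},\lambda}\otimes\mathcal{P}_{\mathrm{B},\mu}$, the projection of $\ket{\psi}^{\otimes n}$ onto each $(\lambda,\mu)$ block must take the form $w_{\lambda\mu}\otimes \ket{\psi_{\lambda\mu}}$, where $w_{\lambda\mu}$ lies in the $\mathfrak{S}_n$-invariant subspace of $\mathcal{P}_{\mathrm{A},\lambda}\otimes\mathcal{P}_{\mathrm{B},\mu}$ and $\ket{\psi_{\lambda\mu}}\in\mathcal{Q}^{d_1}_{\mathrm{A},\lambda}\otimes\mathcal{Q}^{d_2}_{\mathrm{B},\mu}$ is arbitrary.

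The main step is to pin down $w_{\lambda\mu}$. By Schur's lemma together with the fact that every irreducible $\mathfrak{S}_n$-representation is self-dual, the invariant subspace of $\mathcal{P}_\lambda\otimes\mathcal{P}_\mu$ is one-dimensional when $\lambda=\mu$ and zero otherwise. Moreover, Young's orthogonal basis realizes $\mathcal{P}_\lambda$ by real orthogonal matrices, which means that in this basis the self-duality $\mathcal{P}_\lambda^{*}\cong\mathcal{P}_\lambda$ is the literal identity, and hence the unique (up to scalar) invariant vector in $\mathcal{P}_\lambda\otimes\mathcal{P}_\lambda$ is exactly the unnormalized maximally entangled state $\kett{I_{\mathcal{P}_\lambda}}$ as defined in the statement. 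Thus the decomposition collapses to
\[\ket{\psi}^{\otimes n}=\bigoplus_{\lambda}\kett{I_{\mathcal{P}_\lambda}}\otimes\ket{\psi_\lambda}\]
for some uniquely determined $\ket{\psi_\lambda}\in\mathcal{Q}^{d_1}_{\mathrm{A},\lambda}\otimes\mathcal{Q}^{d_2}_{\mathrm{B},\lambda}$.

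To finish, I would restrict the sum to $\lambda\vdash_s n$ by the standard observation that $\mathcal{Q}^{d_i}_\lambda=0$ whenever $\lambda$ has more than $d_i$ rows, so only Young diagrams with at most $s=\min\{d_1,d_2\}$ rows contribute. I expect the main subtle point to be the basis-dependent identification of the invariant with $\kett{I_{\mathcal{P}_\lambda}}$ in Young's orthogonal basis rather than some other basis; the reality of that basis is precisely what ensures no conjugation twist enters the maximally entangled state. A more computational alternative, in case the symmetry argument is felt to be too slick, is to start from a Schmidt decomposition $\ket{\psi}=\sum_i\sqrt{p_i}\ket{i}_\mathrm{A}\ket{i}_\mathrm{B}$ and push it through the Schur transform directly, but the symmetry route is cleaner.
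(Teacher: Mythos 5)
Your proposal is correct and uses essentially the same argument as the paper: both exploit the $\mathfrak{S}_n$-invariance of $\ket{\psi}^{\otimes n}$ under simultaneous permutation, apply Schur--Weyl duality on both factors, invoke Schur's lemma to identify the invariant subspace of $\mathcal{P}_\lambda\otimes\mathcal{P}_\mu$ as one-dimensional when $\lambda=\mu$ and zero otherwise, and use the reality of Young's orthogonal basis to pin that invariant vector down as $\kett{I_{\mathcal{P}_\lambda}}$ with no conjugation twist. The only presentational difference is that the paper computes the averaged projector $\frac{1}{n!}\sum_{\pi}\texttt{p}_\mathrm{A}(\pi)\otimes\texttt{p}_\mathrm{B}(\pi)$ explicitly and observes $\ket{\psi}^{\otimes n}$ lies in its image, whereas you reason blockwise about the invariant component directly; these are two phrasings of the same argument.
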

\begin{proof}
%Due to the Schur-Weyl duality on bipartite systems, we can write 
%\[\ket{\psi}^{\otimes n}=\bigoplus_{\substack{\lambda\vdash_{d_1} n\\ \mu\vdash_{d_2} n}} \ket{X_{\lambda,\mu}},\]
%where $\ket{X_{\lambda,\mu}}\in\mathcal{P}_\lambda\otimes\mathcal{P}_\mu\otimes\mathcal{Q}_\lambda^{d_1}\otimes \mathcal{Q}_\mu^{d_2}$.
Note that $\ket{\psi}^{\otimes n}$ is invariant under the ``simultaneous permutation'' action $\texttt{p}_{\mathrm{A}}(\pi)\otimes\texttt{p}_{\mathrm{B}}(\pi)$ for any $\pi\in\mathfrak{S}_n$.
On the other hand, by the Schur-Weyl duality, we know that
\begin{align}
\frac{1}{n!}\sum_{\pi\in\mathfrak{S}_n} \texttt{p}_{\mathrm{A}}(\pi)\otimes\texttt{p}_{\mathrm{B}}(\pi)&=\bigoplus_{\substack{\lambda\vdash_{d_1} n\\ \mu\vdash_{d_2}n}} \texttt{p}_{\mathrm{A},\lambda}(\pi)\otimes\texttt{p}_{\mathrm{B},\mu}(\pi)\otimes I_{\mathcal{Q}_{\mathrm{A},\lambda}^{d_1}}\otimes I_{\mathcal{Q}_{\mathrm{B},\mu}^{d_2}} \nonumber\\
&=\bigoplus_{\substack{\lambda\vdash_{d_1} n\\ \mu\vdash_{d_2}n}} \texttt{p}^*_{\mathrm{A},\lambda}(\pi)\otimes\texttt{p}_{\mathrm{B},\mu}(\pi)\otimes I_{\mathcal{Q}_{\mathrm{A},\lambda}^{d_1}}\otimes I_{\mathcal{Q}_{\mathrm{B},\mu}^{d_2}}\label{eq-12131226}\\
&=\bigoplus_{\lambda\vdash_s n}\frac{1}{\dim(\mathcal{P}_\lambda)} \kettbbra{I_{\mathcal{P}_\lambda}}{I_{\mathcal{P}_\lambda}} \otimes I_{\mathcal{Q}_{\mathrm{A},\lambda}^{d_1}}\otimes I_{\mathcal{Q}_{\mathrm{B},\lambda}^{d_2}}, \label{eq-12131216}
\end{align}
where in \cref{eq-12131226} the $(\cdot)^*$ is defined w.r.t. the Young's orthogonal basis so that $\texttt{p}_{\lambda}(\pi)$ is a real matrix~\cite{ceccherini2010representation}, and \cref{eq-12131216} is because the only subspace that is invariant under $\texttt{p}^*_{\mathrm{A},\lambda}(\pi)\otimes\texttt{p}_{\mathrm{B},\mu}(\pi)$ is spanned by $\kett{I_{\mathcal{P}_\lambda}}\in\mathcal{P}_{\mathrm{A},\lambda}\otimes\mathcal{P}_{\mathrm{B},\mu}$ when $\lambda=\mu$, and zero space $\{0\}$ otherwise (this can be seen by considering the isomorphism of representations $\mathcal{P}_{\mathrm{A},\lambda}^*\otimes\mathcal{P}_{\mathrm{B},\mu}\stackrel{\mathfrak{S}_n}{\cong} \mathcal{L}(\mathcal{P}_{\mathrm{A},\lambda},\mathcal{P}_{\mathrm{B},\mu})$ and all linear operators in $\mathcal{L}(\mathcal{P}_{\mathrm{A},\lambda},\mathcal{P}_{\mathrm{B},\mu})$ that commute with the action of $\pi$ are proportional to the identity operator when $\lambda=\mu$ and $0$ otherwise, by Schur's lemma). This means $\ket{\psi}^{\otimes n}$ must be in the support of the projector $\frac{1}{n!}\sum_{\pi\in\mathfrak{S}_n}\texttt{p}_\mathrm{A}(\pi)\otimes\texttt{p}_{\mathrm{B}}(\pi)$ and thus has the form $\ket{\psi}^{\otimes n}= \bigoplus_{\lambda\vdash_{s} n}\kett{I_{\mathcal{P}_\lambda}}\otimes\ket{\psi_\lambda}$.

\end{proof}

\bibliographystyle{alpha}
\bibliography{main}
%\printbibliography

\appendix

\section{Isometry channel tomography in diamond norm}
\label{sec:isometry_channel_tomography}

In this appendix, we prove the following lemma,
which extends the $O(d^2/\varepsilon^2)$ unitary channel tomography algorithm in \cite{haah2023query}
to isometry channel tomography.

\begin{lemma}[Isometry channel tomography]\label{thm-1240018}
Let $d_1\leq d_2$ be two positive integers, $\varepsilon\in (0,1)$, $V:\mathbb{C}^{d_1}\rightarrow\mathbb{C}^{d_2}$ be an isometry and $\mathcal{V}=V(\cdot)V^\dag\in \isochannel_{d_1,d_2}$ be the corresponding isometry channel.
There exists an algorithm that uses $O(d_1d_2/\varepsilon^2)$ queries to $\mathcal{V}$ and outputs an isometry channel estimate $\widehat{\mathcal{V}}$ such that $\|\mathcal{V}-\widehat{\mathcal{V}}\|_\diamond\leq \varepsilon$ with probability $\geq 2/3$. Moreover, these queries are used in parallel.
\end{lemma}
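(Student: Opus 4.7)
The plan is to adapt the $O(d^2/\varepsilon^2)$ unitary-tomography algorithm of Haah, Kothari, O'Donnell, and Tang~\cite{haah2023query} to the isometry setting.

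In the query phase, we use all $n$ queries in parallel, each applied to one half of the maximally entangled state $\kett{I}/\sqrt{d_1}$ on $\mathbb{C}^{d_1} \otimes \mathbb{C}^{d_1}$, producing $n$ copies of the pure Choi state $\kett{V}/\sqrt{d_1} \in \mathbb{C}^{d_2} \otimes \mathbb{C}^{d_1}$. In the measurement phase, we apply the natural analog of the HKOT covariant POVM: its outcomes are indexed by candidate isometries $W: \mathbb{C}^{d_1} \to \mathbb{C}^{d_2}$, and its density against the invariant measure on the coset space $\mathbb{U}_{d_2}/\mathbb{U}_{d_2-d_1}$ is proportional to the projection of $\kettbbra{W}{W}^{\otimes n}/d_1^n$ onto the isotypic component supporting $\kett{V}^{\otimes n}$.

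The analysis follows the HKOT template. Using \Cref{lemma-3262205} and Schur--Weyl duality, both the $n$-copy Choi state and the POVM elements decompose into a direct sum indexed by Young diagrams $\lambda \vdash_{d_1} n$, and the probability of outputting $\widehat{V}$ with $\|V - e^{i\phi} \widehat{V}\|_\infty \leq \varepsilon/2$ (for some global phase $\phi$) reduces to a sum that can be controlled analogously to HKOT's calculation, with $\dim(\mathcal{Q}^{d_2}_\lambda)$ replacing $\dim(\mathcal{Q}^{d}_\lambda)$ on the output side. Choosing $n = O(d_1 d_2 / \varepsilon^2)$ suffices. Finally, from the identity
\begin{equation*}
V\rho V^\dag - \widehat{V}\rho \widehat{V}^\dag = (V - e^{i\phi}\widehat{V})\rho V^\dag + e^{i\phi}\widehat{V}\rho (V - e^{i\phi}\widehat{V})^\dag,
\end{equation*}
together with $\|V\|_\infty = \|\widehat{V}\|_\infty = 1$, we obtain $\|\mathcal{V} - \widehat{\mathcal{V}}\|_\diamond \leq 2\|V - e^{i\phi}\widehat{V}\|_\infty \leq \varepsilon$ (the bound being unchanged after tensoring with an identity channel on an arbitrary ancilla).

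The main obstacle will be the representation-theoretic bookkeeping. Whereas HKOT handles a single Schur--Weyl decomposition of $(\mathbb{C}^{d})^{\otimes n}$, here we must track the asymmetry between the input dimension $d_1$ and the output dimension $d_2$: the relevant Young diagrams have at most $d_1$ rows, but the output side carries $\mathbb{U}_{d_2}$-modules $\mathcal{Q}^{d_2}_\lambda$ of larger dimension. The key computation is to show that the resulting sum over $\lambda$ scales like $d_1 d_2$, rather than either $d_1^2$ (too small) or $d_2^2$ (corresponding to naively extending $V$ to a unitary in $\mathbb{U}_{d_2}$ and running HKOT directly, which would waste queries on degrees of freedom outside the image of $V$).
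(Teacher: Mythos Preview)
Your route differs substantially from the paper's. The paper does not use a covariant POVM on Choi states: following HKOT's Theorem~2.1 and Propositions~2.2--2.3, it performs pure-state tomography on each column $V\ket{j}$ separately ($O(d_2/\varepsilon^2)$ copies per column, $d_1$ columns, $O(d_1d_2/\varepsilon^2)$ total), obtaining $\widetilde V\approx V\Phi$ for an unknown diagonal unitary $\Phi$ on the input; it then runs the same procedure on $VF$ (with $F$ the QFT on $\mathbb C^{d_1}$) and compares the two outputs to recover $\Phi$ up to a global phase; finally it rounds to a true isometry by SVD. Operator-norm control falls out immediately here because each column is learned to fixed accuracy and the residual matrix $\sum_j\ket{w_j}\!\bra{j}$ of independent Haar-random error directions has $O(1)$ operator norm by a standard random-matrix bound.

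Your covariant-POVM plan may be workable, but there is a genuine gap you have not addressed, and it is not the one you flag. The quantity a covariant estimator over isometries naturally controls is the entanglement fidelity $|\tr(V^\dagger\widehat V)|^2/d_1^2$, equivalently the Frobenius distance: with $n=O(d_1d_2/\varepsilon^2)$ Choi-state copies one expects infidelity $O(\varepsilon^2)$, hence $\|V-e^{i\phi}\widehat V\|_F=O(\sqrt{d_1}\,\varepsilon)$. But your final step needs $\|V-e^{i\phi}\widehat V\|_\infty\le\varepsilon/2$, and Frobenius only \emph{upper}-bounds operator norm, so this route yields diamond error $O(\sqrt{d_1}\,\varepsilon)$, off by a factor $\sqrt{d_1}$. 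The phrase ``analogously to HKOT's calculation'' does not close this gap: HKOT's $O(d^2/\varepsilon^2)$ diamond-norm result \emph{is} the column-by-column algorithm above, not a covariant-POVM analysis, so there is no such calculation to point to. To make your approach go through you would need a direct argument that all singular values of $V^\dagger\widehat V$ concentrate near $1$ simultaneously under the covariant measurement, which is strictly stronger than fidelity concentration and is not supplied by your Schur--Weyl sketch.
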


The core ingredient to prove \Cref{thm-1240018} is the following lemma~\cite{CL14,KRT17,GKKT20,haah2023query} for pure state tomography.

\begin{lemma}[Pure state tomography, c.f.\ {\cite[Proposition 2.2]{haah2023query}}]
    \label{lmm:pure-state-tomo}
    Let $d$ be a positive integer.
    There exists a pure state tomography algorithm that uses $O(d/\varepsilon_{\max})$ copies of the input quantum state $\ket{v}\in \Co^d$ and outputs a quantum state estimate (by a classical description) $\ket{\widehat{v}}$ such that
    \begin{equation*}
        \ket{\widehat{v}}= \phi \sqrt{1-\varepsilon} \ket{v} + \sqrt{\varepsilon} \ket{w},
    \end{equation*}
    where $\phi$ is a random phase, $\varepsilon\in [0,1]$ is a random number with $\Pr[\varepsilon\leq \varepsilon_{\max}]\geq 1- \exp(-5d)$, and $\ket{w}$ is a Haar random state orthogonal to $\ket{v}$.
    %i.e., $\ket{w}\sim \textup{Haar}(v^\bot)$ with $v^\bot=\braces*{\ket{u}:\braket{u}{v}=0}$.
\end{lemma}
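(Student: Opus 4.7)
The plan is to realize the algorithm as the standard covariant continuous POVM on the symmetric subspace and extract all three claimed structural properties from its unitary equivariance plus a Beta-distribution tail bound.

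Concretely, on $n$ copies of $\ket{v}$ I would perform the POVM
\begin{equation*}
dM(\ket{u}) \;=\; \binom{n+d-1}{n}\,\ketbra{u}{u}^{\otimes n}\,du,
\end{equation*}
where $du$ is the $U(d)$-invariant (Fubini--Study) measure on pure states of $\Co^d$. By Schur's lemma, $\int \ketbra{u}{u}^{\otimes n}\,du = \Pi_{\mathrm{sym}}/\binom{n+d-1}{n}$, so this is a complete POVM on the symmetric subspace, which contains $\ket{v}^{\otimes n}$. The algorithm reports the POVM outcome $\ket{u}$ as a classical vector; although continuous, the measurement can be replaced by a fine discretization at a negligible cost in accuracy.

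The law of the output $\ket{\hat v}$ given input $\ket{v}^{\otimes n}$ has density $p(\hat v \mid v)=\binom{n+d-1}{n}|\braket{\hat v}{v}|^{2n}$, which is $U(d)$-covariant: $p(W\hat v\mid Wv)=p(\hat v\mid v)$. Fixing $\ket{v}=\ket{e_1}$ and writing $\ket{\hat v}=\phi\sqrt{1-\varepsilon}\,\ket{e_1}+\sqrt{\varepsilon}\,\ket{w}$ with $\ket{w}\in\mathrm{span}(\ket{e_2},\ldots,\ket{e_d})$ a unit vector, covariance under the stabilizer $U(1)\times U(d-1)$ of $\ket{e_1}$ leaves $\varepsilon$ invariant while rotating $\phi$ and $\ket{w}$ independently. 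Hence, conditionally on $\varepsilon$, $\phi$ is uniform on $U(1)$ and $\ket{w}$ is Haar-uniform on the unit sphere of $\ket{v}^\perp$, mutually independent---this gives the stated structural form. The marginal of $\varepsilon$ is obtained from the fact that $|u_1|^2$ is Beta$(1,d-1)$-distributed under $du$ (density $(d-1)\varepsilon^{d-2}$ in $\varepsilon=1-|u_1|^2$); multiplying by the tilt $(1-\varepsilon)^n$ and normalizing gives
\begin{equation*}
q(\varepsilon) \;=\; \binom{n+d-1}{n}(d-1)\,\varepsilon^{d-2}(1-\varepsilon)^n,
\end{equation*}
which is Beta$(d-1,n+1)$ with mean of order $d/n$.

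The final step is to verify $\Pr[\varepsilon>\varepsilon_{\max}]\leq e^{-5d}$ for $n\geq Cd/\varepsilon_{\max}$ with an absolute constant $C$. Changing variables to $t=n\varepsilon$, using $\binom{n+d-1}{n}/n^{d-1}\leq 1/(d-1)!$ and $(1-t/n)^n\leq e^{-t}$, I would dominate $q(\varepsilon)\,d\varepsilon$ by the Gamma$(d-1,1)$ density $t^{d-2}e^{-t}/(d-2)!\,dt$, and then apply Stirling to get a tail of the form $\exp\!\big((d-2)(1+\ln(x/(d-2)))-x\big)$ at $x=n\varepsilon_{\max}$. This drops below $e^{-5d}$ as soon as $x/(d-2)$ exceeds a numerical constant (around $9$), giving $n=O(d/\varepsilon_{\max})$. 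I expect this concentration bound to be the only real obstacle: the naive bound $(1-\varepsilon)^n\leq e^{-n\varepsilon_{\max}}$ combined with the $\binom{n+d-1}{n}$ prefactor leaks a spurious $\log(1/\varepsilon_{\max})$ factor, so one must use the sharper Gamma-tail (equivalently, the sub-Gaussian behaviour of the Beta upper tail) to achieve the claimed sample complexity.
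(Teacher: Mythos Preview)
The paper does not prove this lemma; it is quoted from \cite{haah2023query} (with roots in \cite{CL14,KRT17,GKKT20}) and used as a black box. Your approach---the covariant POVM on the symmetric subspace, the $U(1)\times U(d-1)$ stabilizer argument to factor the output into $(\phi,\varepsilon,\ket{w})$, and the Beta$(d-1,n+1)$ tail bound---is exactly the standard argument behind that cited result, so there is nothing to compare against in the present paper.

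One genuine slip: you assert $\binom{n+d-1}{n}/n^{d-1}\leq 1/(d-1)!$, but the inequality goes the other way, since $(n+1)(n+2)\cdots(n+d-1)\geq n^{d-1}$. This is easily repaired: the ratio equals $\prod_{j=1}^{d-1}(1+j/n)\leq \exp\!\big(d(d-1)/(2n)\big)$, which for $n\geq Cd/\varepsilon_{\max}$ is at most $\exp\!\big((d-1)\varepsilon_{\max}/(2C)\big)\leq e^{d/(2C)}$. This extra $e^{O(d)}$ factor is harmless against the target $e^{-5d}$ once $C$ is chosen large enough, and with that correction your Gamma-tail estimate goes through as written.
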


The second lemma we need is to convert a weak tomography algorithm into a standard tomography algorithm as required by \Cref{thm-1240018}.

\begin{lemma}
    \label{lmm:weak-tomo}
    Let $d_1\leq d_2$ be two positive integers, $V:\mathbb{C}^{d_1}\rightarrow\mathbb{C}^{d_2}$ be an isometry, and $\mathcal{V}=V(\cdot)V^\dag\in \isochannel_{d_1,d_2}$ be the corresponding isometry channel.
    Let $\calA$ be a weak isometry channel tomography algorithm such that given queries to $\calV$, it outputs an isometry estimate $\widehat{V}:\mathbb{C}^{d_1}\rightarrow\mathbb{C}^{d_2}$ such that 
    \begin{equation}
        \label{eq:weak-tomo-precond}
        \Pr\bracks*{\exists \textup{ diagonal unitary } \Phi:\Co^{d_1}\rightarrow \Co^{d_1}, \norm*{V\Phi- \widehat{V}}_{\textup{op}}\leq \varepsilon\leq \frac{1}{8}} \geq 1-\eta,
    \end{equation}
    where $\norm*{\cdot}_{\textup{op}}$ denotes the operator norm.
    Then, there exists an isometry channel tomography algorithm that uses $\calA$ twice in parallel and outputs an isometry estimate $\widehat{\calV'}$ such that
    \begin{equation*}
        \Pr\bracks*{\norm*{\calV -\widehat{\calV'}}_{\diamond}\leq 98\varepsilon}\geq 1-2\eta.
    \end{equation*}
    %where $\widehat{\calV'}$ is the quantum channel corresponding to $\widehat{V'}$.
\end{lemma}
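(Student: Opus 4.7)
The plan is to run $\calA$ twice in parallel, feeding it two related but distinct channels whose outputs can be combined to cancel the diagonal-unitary ambiguity. Concretely, I would invoke $\calA$ once on $\calV$ itself (producing $\widehat{V}_1$) and once on the channel $\calV \circ \calF$, where $\calF = F(\cdot)F^\dagger$ is the unitary channel implementing the $d_1$-dimensional DFT $F$ (producing $\widehat{V}_2$). Since $\calV \circ \calF \in \isochannel_{d_1,d_2}$ has the same input/output dimensions as $\calV$, $\calA$ applies, and all queries from the two invocations can be issued simultaneously, so this is a genuine parallel use of $\calA$ twice. A union bound gives that with probability at least $1-2\eta$ the weak precondition holds for both runs: there exist diagonal unitaries $\Phi_1,\Phi_2$ on $\Co^{d_1}$ with $\|V\Phi_1-\widehat{V}_1\|_{\textup{op}} \leq \varepsilon$ and $\|VF\Phi_2-\widehat{V}_2\|_{\textup{op}} \leq \varepsilon$.

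Next I would form the matrix $M := \widehat{V}_1^\dagger \widehat{V}_2 \in \Co^{d_1 \times d_1}$. Using $V^\dagger V = I$ and submultiplicativity, a short expansion yields $\|M - \Phi_1^\dagger F \Phi_2\|_{\textup{op}} \leq 2\varepsilon + \varepsilon^2$. The target matrix $N := \Phi_1^\dagger F \Phi_2$ has all entries of magnitude $1/\sqrt{d_1}$, and the entrywise Hadamard product $A := \overline{F} \odot M$ approximates the rank-one matrix $\tfrac{1}{d_1}\overline{\phi_1}\phi_2^{\top}$, where $\phi_i$ is the diagonal-entry vector of $\Phi_i$. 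From this rank-one structure I would extract $\Phi_1$ up to a global phase by taking the top left singular vector of $A$, rescaling so that each coordinate has unit modulus, and packaging the result as a diagonal unitary $\widehat{\Phi}_1$.

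The final estimate is $\widehat{V}' := \widehat{V}_1 \widehat{\Phi}_1^\dagger$ (optionally replaced by the isometry part of its polar decomposition), giving $\widehat{\calV'}(\rho) := \widehat{V}' \rho \widehat{V}'^\dagger$. The error analysis then chains: the operator-norm closeness of $\widehat{V}_1, \widehat{V}_2$ controls that of $M$ to $\Phi_1^\dagger F \Phi_2$; the closeness of $A$ to its rank-one target controls the closeness of $\widehat{\Phi}_1$ to $\Phi_1$ up to a global phase; the triangle inequality gives $\|V - c\widehat{V}'\|_{\textup{op}} \leq \varepsilon + \|I - c \Phi_1 \widehat{\Phi}_1^\dagger\|_{\textup{op}}$ for the best global phase $c$; and finally the standard bound $\|\calV - \widehat{\calV'}\|_\diamond \leq 2\min_{|c|=1}\|V - c\widehat{V}'\|_{\textup{op}}$ converts to the diamond norm, yielding the claimed $98\varepsilon$ bound once constants are tracked.

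The main obstacle I anticipate is the step converting a Frobenius/$\ell_2$ subspace-perturbation bound on the top singular vector of $A$ (which Wedin's theorem would give for free) into the $\ell_\infty$-style bound needed to control the operator norm of $\Phi_1 \widehat{\Phi}_1^\dagger - cI$; a naive conversion costs a factor of $\sqrt{d_1}$, which would defeat the constant-factor loss the lemma demands. To surmount this, I would not treat the phase extraction as a blackbox SVD, but rather analyze each diagonal entry of $\widehat{\Phi}_1$ separately, aggregating the entries of row $j$ of $M$ against row $j$ of $F$ and using the full operator-norm bound $\|M - \Phi_1^\dagger F \Phi_2\|_{\textup{op}} \leq 2\varepsilon + \varepsilon^2$ (rather than only its $\sqrt{d_1}$-weaker Frobenius consequence) plus the rigidity coming from $\widehat{V}_1,\widehat{V}_2$ being genuine isometries, to show that each extracted phase is within $O(\varepsilon)$ of the true one, uniformly in $d_1$.
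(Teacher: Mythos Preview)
Your framework matches the paper's exactly: run $\calA$ on $\calV$ and on $\calV\circ\calF$ in parallel, union-bound, form $M=\widehat{V}_1^\dagger\widehat{V}_2\approx\Phi_1^\dagger F\Phi_2$ in operator norm, extract the diagonal phases entrywise, and finish with the $\|\cdot\|_\diamond\le 2\min_{|c|=1}\|\cdot\|_{\textup{op}}$ conversion. You also correctly flag the real difficulty: the operator-norm closeness of $M$ to $\Phi_1^\dagger F\Phi_2$ only yields an $\ell_2$ bound on each column of the error, not the entrywise/$\ell_\infty$ control one needs to pin down each phase to within $O(\varepsilon)$.

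Where your proposal has a genuine gap is the resolution of that obstacle. You describe ``aggregating the entries of row $j$ of $M$ against row $j$ of $F$'' and appeal to ``rigidity'' of isometries, but give no concrete estimator. The paper does not use SVD or any inner-product aggregation; instead it uses a \emph{pigeonhole + median} argument. From $\|M-\Phi_1^\dagger F\Phi_2\|_{\textup{op}}\le 2\varepsilon$ one gets that in each column at least $3d_1/4$ entries satisfy $|M_{kj}-(\Phi_1^\dagger F\Phi_2)_{kj}|\le 4\varepsilon/\sqrt{d_1}$. Dividing by $F_{kj}$ (all of modulus $1/\sqrt{d_1}$) and then taking the ratio of column $j$ to column $0$ cancels the $\Phi_1$ factor and, for the at least $d_1/2$ indices $k$ good for both columns, gives an $O(\varepsilon)$ estimate of $(\Phi_2)_{jj}/(\Phi_2)_{00}$. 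Taking the \emph{median} (of real and imaginary parts separately) over $k$ is then robust to the minority of bad indices. This is the step your proposal is missing: a simple average or inner-product aggregation would be spoiled by the few bad entries, whose errors can be arbitrarily amplified after dividing by $F_{kj}$ and by the reference column. The paper then outputs $\widehat{V_2}\Phi^\dagger F^\dagger$ (symmetric to your $\widehat{V}_1\widehat{\Phi}_1^\dagger$), and tracks constants to the stated $98\varepsilon$.
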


\begin{proof}
    We extend the proof of \cite[Proposition 2.3]{haah2023query} for unitary channel tomography to the case of isometry channel tomography.
    Suppose $\calA$ is a weak isometry channel tomography algorithm as described in \Cref{lmm:weak-tomo}.
    Let us apply $\calA$ using queries to $\calV$ to obtain an isometry estimate $\widehat{V_1}:\Co^{d_1}\rightarrow \Co^{d_2}$.
    In parallel, we apply $\calA$ using queries to $\calV \circ \calF$
    to obtain another isometry estimate $\widehat{V_2}:\Co^{d_1}\rightarrow \Co^{d_2}$,
    where $\calF$ is the quantum channel for quantum Fourier transform $\mathit{F}:\Co^{d_1}\rightarrow \Co^{d_1}$.

    By our condition of $\calA$ and the union bound, we have
    \begin{equation}
        \label{eq:two-weak-tomo}
        \norm*{V\Phi_1 - \widehat{V_1}}_{\textup{op}}\leq \varepsilon\quad \textup{and}\quad \norm*{V F \Phi_2 -\widehat{V_2}}_{\textup{op}}\leq \varepsilon 
    \end{equation}
    for some diagonal unitaries $\Phi_1,\Phi_2:\Co^{d_1}\rightarrow \Co^{d_1}$,
    with probability $\geq 1-2\eta$.
    In this case, since $V$ is an isometry, we have $V^\dagger V = \sum_{j=0}^{d_1-1} \ket{j}\!\bra{j}=I_{d_1}$ and consequently
    \begin{align*}
        \norm*{\widehat{V_1}^\dagger \widehat{V_2}- \Phi_1^\dagger F \Phi_2}_{\textup{op}}
        &\leq \norm*{\parens*{\widehat{V_1}^\dagger -  \Phi_1^\dagger V^\dagger} \widehat{V_2}}_{\textup{op}} + \norm*{ \Phi_1^\dagger V^\dagger\parens*{\widehat{V_2} - VF \Phi_2}}_{\textup{op}}\\
        &\leq \norm*{V\Phi_1 - \widehat{V_1}}_{\textup{op}} + \norm*{V F \Phi_2 -\widehat{V_2}}_{\textup{op}}\\
        &\leq 2\varepsilon.
    \end{align*}
    Let $p(k,j)$ be the proposition 
    \begin{equation}
        \label{eq:def-p-k-j}
        \abs*{\bra{k}\parens*{\widehat{V_1}^\dagger \widehat{V_2}- \Phi_1^\dagger F \Phi_2}\ket{j}}\leq \frac{4\varepsilon}{\sqrt{d_1}}.
    \end{equation}
    From the pigeonhole principle, we have
    \begin{equation}
        \label{eq:condition-k-j}
        \textup{for any $j=0$ to $d_1-1$,}\quad \#\braces*{k: p(k,j)} \geq \frac{3d_1}{4}, 
    \end{equation}
    with probability $\geq 1-2\eta$, 
    where $\#\braces*{k: p(k,j)}$
    denotes the number of $k$ such that $p(k,j)$ is satisfied.
    
    Let $\Phi_3= \sum_{k,j=0}^{d_1-1}\frac{\bra{k}\widehat{V_1}^\dagger \widehat{V_2} \ket{j}}{\bra{k}F\ket{j}}\ket{k}\!\bra{j}$.
    %\begin{equation*}
    %    \bra{k}\Phi_3\ket{j} = \frac{\bra{k}\widehat{V_1}^\dagger \widehat{V_2}\ket{j}}{\bra{k} \Phi_1^\dagger F\Phi_2 \ket{j}}
    %    =  \frac{\sqrt{d}\bra{k}\widehat{V_1}^\dagger \widehat{V_2}\ket{j}}{\bra{k} \Phi_1^\dagger\ket{k} \cdot \bra{j}\Phi_2 \ket{j}}.
    %\end{equation*}
    If $p(k,j)$ is satisfied, then from \Cref{eq:def-p-k-j}, we have
    \begin{equation}
        \label{eq:phi321}
        \abs*{\bra{k}\Phi_3\ket{j} - \bra{k}\Phi_1^\dagger\ket{k}\cdot \bra{j}\Phi_2\ket{j}} \leq 4\varepsilon,
    \end{equation}
     where we use that $\abs*{\bra{k}F\ket{j}}=\frac{1}{\sqrt{d_1}}$ for any $k,j$.
    Further, if both $p(k,0)$ and $p(k,j)$ are satisfied, then \Cref{eq:phi321} implies
    \begin{equation}
        \label{eq:estimate-Phi3}
        \abs*{\frac{\bra{k}\Phi_3\ket{j}}{\bra{k}\Phi_3\ket{0}}-\frac{\bra{j}\Phi_2\ket{j}}{\bra{0}\Phi_2\ket{0}}} \leq \frac{2\cdot 4\varepsilon}{1-4\varepsilon} \leq 16 \varepsilon.
    \end{equation}
    By \Cref{eq:condition-k-j}, we have
    \begin{equation}
        \label{eq:both-pkj-pk0}
        \textup{for any $j=0$ to $d_1-1$,}\quad \#\braces*{k: p(k,j) \wedge p(k,0)} \geq \frac{d_1}{2},
    \end{equation}
    with probability $\geq 1-2\eta$.
    For each $j=0$ to $d_1-1$, let $a_j$ and $b_j$ be the medians of the real parts and the imaginary parts of the set $\braces*{\frac{\bra{k}\Phi_3\ket{j}}{\bra{k}\Phi_3\ket{0}}}_k$, respectively. 
    Then, \Cref{eq:estimate-Phi3,eq:both-pkj-pk0} lead to
    \begin{equation*}
        \abs*{\parens*{a_j + i b_j} - \frac{\bra{j}\Phi_2\ket{j}}{\bra{0}\Phi_2\ket{0}}}\leq \sqrt{(16\varepsilon)^2 + (16\varepsilon)^2},
    \end{equation*}
    which further implies that $\phi_j =\frac{a_j + i b_j}{\abs*{a_j + i b_j}}$ satisfies
    $\abs*{\phi_j-\frac{\bra{j}\Phi_2\ket{j}}{\bra{0}\Phi_2\ket{0}}}\leq 48\varepsilon$.
    Let $\Phi=\sum_{j=0}^{d_1-1} \phi_j\ket{j}\!\bra{j}$,
    then 
    \begin{equation}
        \label{eq:Phi-estimate}
        \norm*{\bra{0}\Phi_2\ket{0}\cdot\Phi - \Phi_2}_{\textup{op}}\leq 48\varepsilon
    \end{equation}
    with probability $\geq 1-2\eta$.
    Then, we have
    \begin{align}
        \norm*{\calV-\widehat{\calV_2} \Phi^\dagger \calF^\dagger}_{\diamond}&\leq 2
        \norm*{V\bra{0}\Phi_2\ket{0} -\widehat{V_2} \Phi ^\dagger F^\dagger }_{\textup{op}} \label{eq-221646}\\
        &\leq 
        2\norm*{V\bra{0}\Phi_2\ket{0}- \widehat{V_2} \Phi_2^\dagger F^\dagger \bra{0}\Phi_2\ket{0}}_{\textup{op}}
        + 2\norm*{\widehat{V_2} \Phi_2^\dagger F^\dagger \bra{0}\Phi_2\ket{0} - \widehat{V_2} \Phi^\dagger F^\dagger}_{\textup{op}} \nonumber\\
        &=
        2\norm*{V- \widehat{V_2} \Phi_2^\dagger F^\dagger}_{\textup{op}}
        +
        2\norm*{\bra{0}\Phi_2\ket{0}\cdot\Phi - \Phi_2}_{\textup{op}} \label{eq-221647}\\
        &\leq 98\varepsilon,\label{eq-221648}
    \end{align}
    with probability $\geq 1-2\eta$,
    where \cref{eq-221646} is by \cite[Lemma 12]{AKN98} (see also \cite{kretschmann2008information}), \cref{eq-221647} uses the fact that $\Phi_2$ is a diagonal unitary, $\widehat{V_2}$ is an isometry and $F$ is a unitary, and \cref{eq-221648} uses \Cref{eq:two-weak-tomo,eq:Phi-estimate}.
    Finally, our algorithm outputs the isometry channel corresponding to $\widehat{V'}=\widehat{V_2} \Phi^\dagger F^\dagger$ as the estimate.
\end{proof}

Using the above lemma,
one is ready to prove \Cref{thm-1240018} for isometry channel tomography.

\begin{proof}[Proof of \Cref{thm-1240018}]
    We extend the proof of \cite[Theorem 2.1]{haah2023query} to isometry channel tomography.
    By \Cref{lmm:weak-tomo} (with proper rescaling of $\varepsilon$), it is sufficient to construct a weak isometry channel tomography algorithm that satisfies \Cref{eq:weak-tomo-precond} with $\eta= \frac{1}{6}$.
    The algorithm works as follows:
    \begin{enumerate}
        \item 
            Given queries to $\calV$, we first apply the pure state tomography algorithm in~\Cref{lmm:pure-state-tomo} (taking $d=d_2$ and $\varepsilon_{\max}=\Theta(\varepsilon^2)$ to be determined later) on computational basis input states $\ket{0},\ket{1},\ldots, \ket{d_1-1}$ to obtain estimates $\ket{\widetilde{v_j}}$ of
            $\ket{v_j}=V\ket{j}$ for all $j$ in parallel. We have
            \begin{equation}
                \label{eq:cond-each-vj}
                \ket{\widetilde{v_j}}= \phi_j \sqrt{1-\varepsilon_j} \ket{v_j} + \sqrt{\varepsilon_j} \ket{w_j},
            \end{equation}
            where for each $j=0$ to $d_1-1$, the random variables $\phi_j, \varepsilon_j, \ket{w_j}$ are as in \Cref{lmm:pure-state-tomo}.
        \item 
            Define $\widetilde{V}=\sum_j \ket{\widetilde{v_j}}\!\bra{j}$.
            Suppose $\widetilde{V}$ has the singular value decomposition $\widetilde{V}=U_2 \Lambda U_1$ with $U_1\in \mathbb{U}_{d_1}$ and $U_2\in \mathbb{U}_{d_2}$, respectively.
            Output the quantum channel $\widehat{\calV}$ corresponding to the isometry $\widehat{V}=U_2 \sum_{j=0}^{d_1-1} \ket{j}\!\bra{j} U_1$.
    \end{enumerate}
    It is easy to calculate that the number of queries to $\calV$ in the above algorithm is $O\parens*{d_1d_2 /\varepsilon^2}$. To see that $\widehat{V}$ satisfies \Cref{eq:weak-tomo-precond} in \Cref{lmm:weak-tomo},
    let us prove that
    \begin{equation}
        \label{eq:target-iso-tomo}
        \norm*{V \Phi- \widetilde{V}}_{\textup{op}}\leq \varepsilon/2
    \end{equation}
    for some diagonal unitary $\Phi:\Co^{d_1}\rightarrow \Co^{d_1}$,
    with probability $\geq 0.97 \geq \frac{5}{6}$.
    As long as \Cref{eq:target-iso-tomo} holds,
    the estimate $\widehat{V}$
    in the algorithm above will satisfy
    \begin{equation*}
        \norm*{V\Phi -\widehat{V}}_{\textup{op}}
        \leq \norm*{V\Phi- \widetilde{V}}_{\textup{op}}
        + \norm*{\widetilde{V}-\widehat{V}}_{\textup{op}}
        \leq \varepsilon,
    \end{equation*}
    where we used $\|\widetilde{V}-\widehat{V}\|_{\mathrm{op}} \leq \varepsilon/2$ since once \cref{eq:target-iso-tomo} holds, the operator norm between $\widetilde{V}$ and an isometry is at most $\varepsilon/2$ and thus the differences between the singular values of $\widetilde{V}$ and $1$ are at most $\varepsilon/2$.
    
    Let $W=\sum_{j=0}^{d_1-1} \ket{w_j}\!\bra{j}$, $\Phi=\sum_{j=0}^{d_1-1} \phi_j \ket{j}\!\bra{j}$, $B_1=\sum_{j=0}^{d_1-1} \sqrt{\varepsilon_j}\ket{j}\!\bra{j}$, and $B_2=\sum_{j=0}^{d_1-1} \sqrt{1-\varepsilon_j}\ket{j}\!\bra{j}$.
    Then, $\abs*{\sqrt{1-\varepsilon_j}- 1}\leq \sqrt{\varepsilon_j}$ for all $j=0$ to $d_1-1$ implies
    $\norm*{B_2-I_{d_1}}_{\textup{op}}\leq \norm*{B_1}_{\textup{op}}$,
    where $I_{d_1}=\sum_{j=0}^{d_1-1} \ket{j}\!\bra{j}$.
    By \Cref{lmm:pure-state-tomo} (taking $d=d_2$, where $d_2$ are sufficiently large), we have 
    \begin{equation*}
        \norm*{B_1}_{\textup{op}} \leq \sqrt{\varepsilon_{\max}},
    \end{equation*}
    with probability $\geq 0.99$.
    By the triangle inequality,
    \begin{equation*}
        \norm*{V\Phi- \widetilde{V}}_{\textup{op}} = \norm*{V\Phi (B_2- I_{d_1}) + W B_1}_{\textup{op}}
        \leq \norm*{V}_{\textup{op}}\cdot\norm*{\Phi}_{\textup{op}}\cdot\norm*{B_2- I_{d_1}}_{\textup{op}} + \norm*{W}_{\textup{op}}\cdot\norm*{B_1}_{\textup{op}}
    \end{equation*}
    and therefore
    \begin{equation}
        \label{eq:bound-V-tildeV}
        \norm*{V\Phi- \widetilde{V}}_{\textup{op}}\leq \sqrt{\varepsilon_{\max}} (1+\norm*{W}_{\textup{op}}),
    \end{equation}
    with probability $\geq 0.99$.

    Now we prove that
    \begin{equation}
        \label{eq:bound-W-cw}
        \norm*{W}_{\textup{op}}\leq c_W
    \end{equation}
    for some constant $c_W>0$,
    with probability $\geq 0.98$.
    This will lead to \Cref{eq:target-iso-tomo} by combining with \Cref{eq:bound-V-tildeV} and choosing $\varepsilon_{\max}=\Theta(\varepsilon^2)$ to be sufficiently small.
    
    For each $j=0$ to $d_1-1$, define a quantum state
    \begin{equation}
        \label{eq:def-yj}
        \ket{y_j} = \sqrt{\delta_j} \psi_j \ket{v_j} + \sqrt{1-\delta_j} \ket{w_j},
    \end{equation}
    where $\sqrt{\delta_j}= \abs*{\braket{0}{x_j}}$ is the overlap between a Haar random state $\ket{x_j}\sim \Co^{d_2}$ and the state $\ket{0}$, and $\psi_j\sim [0,2\pi)$ is an uniformly random phase.
    We also require that $\ket{x_0},\ldots,\ket{x_{d_1-1}}$ are independent
    and $\psi_0,\ldots, \psi_{d_1-1}$ are independent.
    Then, $\ket{y_j}\sim \Co^{d_2}$ and $\ket{y_0},\ldots,\ket{y_{d_1-1}}$ are independent. 
    Let $Y=\sum_{j=0}^{d_1-1} \ket{y_j}\!\bra{j}$.
    Using \cite[Theorem 3.4.6, complex version]{Ver18}, $\sqrt{d_2} Y$ has its column vectors being independent sub-gaussian isotropic random in $\Co^{d_2}$,
    and we can bound the maximal singular value of $Y$ with high probability by \cite[Theorem 4.6.1, complex version]{Ver18}:
    \begin{equation}
        \label{eq:bound-Y}
        \norm*{Y}_{\textup{op}}\leq c_Y
    \end{equation}
    for some constant $c_Y>0$, with probability $\geq 0.99$.
    Let $E_1=\sum_{j=0}^{d_1-1} \sqrt{\delta_j}\psi_j \ket{j}\!\bra{j}$ and
    $E_2=\sum_{j=0}^{d_1-1} \sqrt{1-\delta_j} \ket{j}\!\bra{j}$.
    Since 
    \begin{equation*}
        \norm*{W}_{\textup{op}}= \norm*{\parens*{Y - V E_1} E_2^{-1}}_{\textup{op}}\leq
        \parens*{\norm*{Y}_{\textup{op}}+\norm*{V}_{\textup{op}}\cdot \norm*{E_1}_{\textup{op}}}\cdot \norm*{E_2^{-1}}_{\textup{op}},
    \end{equation*}
    by combining with \Cref{eq:def-yj,eq:bound-Y}, we have
    \begin{equation}
        \label{eq:bound-W}
        \norm*{W}_{\textup{op}}
        \leq \parens*{c_Y + 1} \cdot \parens{1-\max_{j} \delta_j}^{-1/2}
    \end{equation}
    with probability $\geq 0.99$.
    As $\sqrt{d_2}\ket{x_j}$ are sub-gaussian (like the case of $\sqrt{d_2}\ket{y_j}$, according to \cite[Theorem 3.4.6, complex version]{Ver18}), $\sqrt{d_2\delta_j}$ are also sub-gaussian by definition, yielding
    %which implies $d_2\delta$ is sub-exponential by \cite[Lemma 2.7.4]{Vershynin09}. Consequently, 
    \begin{equation*}
        \Pr\bracks*{\sqrt{\delta_j} \leq 0.1} \geq 1- e^{-\Theta(d_2)}.
    \end{equation*}
    By a union bound, for sufficiently large $d_2$ and note that $d_1\leq d_2$,
    $\Pr\bracks*{(1-\max_j \delta_j)^{-1/2}\leq 2}\geq 0.99$,
    and therefore we establish \Cref{eq:bound-W-cw}.
\end{proof}

\end{document}